\documentclass[a4paper,USenglish,cleveref,autoref,thm-restate]{lipics-v2021}

\usepackage[algo2e,algoruled,vlined,linesnumbered]{algorithm2e}
\usepackage{setspace}
\usepackage{xspace}

\SetKwInput{Input}{Input}
\SetKwInput{Output}{Output}
\SetKwInput{Ret}{Return}
\SetKw{Proc}{Procedure}
\SetKw{out}{output}
\SetKwFunction{Enum}{enumerate}
\SetKwFunction{Ext}{extendable}
\newcommand*{\halt}{\textsc{halt}\xspace} 

\let\oldnl\nl
\newcommand{\nonl}{\renewcommand{\nl}{\let\nl\oldnl}}

\crefname{prop}{property}{properties}
\creflabelformat{prop}{#2{(\textit{#1})}#3}
\crefname{cond}{condition}{conditions}

\newcommand*{\nwspace}{\hspace*{.1em}} 

\newcommand*{\Fyp}{\mathcal{F}}
\newcommand*{\Gyp}{\mathcal{G}}
\newcommand*{\Hyp}{\mathcal{H}}

\newcommand*{\TransRank}{\textsc{Transversal Rank}\xspace}
\newcommand*{\ConfDeg}{\textsc{Conformal Degree}\xspace}
\newcommand*{\ExtHS}{\textsc{Extension MinHS}\xspace}

\newcommand*{\poly}{\textup{\textsf{poly}}}

\DeclareMathOperator{\C}{\mathcal{C}}

\DeclareMathOperator{\rank}{rank}
\DeclareMathOperator{\Tr}{Tr}

\newcommand*{\FPT}{\textsf{FPT}\xspace}
\newcommand*{\NP}{\textsf{NP}\xspace}
\newcommand*{\W}{\textsf{W}\xspace}
\newcommand{\co}{\textsf{co}}
\newcommand{\para}{\textsf{para}}

\let\oldsqrt\sqrt
\def\hksqrt{\mathpalette\DHLhksqrt}
\def\DHLhksqrt#1#2{\setbox0=\hbox{$#1\oldsqrt{#2\,}$}\dimen0=\ht0
   \advance\dimen0-0.2\ht0
   \setbox2=\hbox{\vrule height\ht0 depth -\dimen0}%
   {\box0\lower0.4pt\box2}}
\renewcommand\sqrt\hksqrt

\renewcommand*{\le}{\leqslant}
\renewcommand*{\ge}{\geqslant}

\providecommand{\ignore}[1]{}

\title{Transversal Rank, Conformality and Enumeration}

\author{Martin Schirneck}{Karlsruhe Institute of Technology}{martin.schirneck@kit.edu}{https://orcid.org/0000-0001-7086-5577}{The author is supported by the German Research Foundation (DFG)
under grant agrmt.\ No.~556899211 (``Design, Analysis, and Engineering of Enumeration Algorithms'').}

\authorrunning{M.~Schirneck}
\Copyright{Martin Schirneck} 

\ccsdesc[500]{Mathematics of computing~Hypergraphs}
\ccsdesc[300]{Mathematics of computing~Enumeration}
\ccsdesc[500]{Theory of computation~Graph algorithms analysis}
\ccsdesc[300]{Theory of computation~W hierarchy}

\keywords{conformal degree, enumeration, hitting sets, hypercliques, hypergraphs, independent sets, transversal rank.}

\category{} 

\relatedversion{} 

\nolinenumbers 
\hideLIPIcs

\EventEditors{Sayan Bhattacharya, Michael Benedikt, and Danupon Nanongkai}
\EventNoEds{3}
\EventLongTitle{53rd International Colloquium on Automata, Languages, and Programming (ICALP 2026) }
\EventShortTitle{ICALP 2026}
\EventAcronym{ICALP}
\EventYear{2026}
\EventDate{July 7--10, 2026}
\EventLocation{London, United Kingdom}
\EventLogo{}
\SeriesVolume{X}
\ArticleNo{X}

\begin{document}

\maketitle

\begin{abstract}
	The \emph{transversal rank} of a hypergraph is the maximum size of its minimal hitting sets.
	Deciding, for an $n$-vertex, $m$-edge hypergraph and an integer $k$,
	whether the transversal rank is at least $k$ takes time $O(m^{k+1} \nwspace n)$
	with an algorithm that is known since the 70s.
	It essentially matches an $(m+n)^{\Omega(k)}$ ETH-lower bound
	by Araújo, Bougeret, Campos, and Sau [Algorithmica 2023] and Dublois, Lampis, and Paschos~[TCS 2022].
	Many hypergraphs seen in practice have much more edges than vertices, $m \gg n$.
	This raises the question whether an improvement of the run time dependency on $m$
	can be traded for an increase in the dependency on $n$.
	
	Our first result is an algorithm to recognize hypergraphs with transversal rank at least $k$
	in time $O(\Delta^{k-2} \nwspace mn^{k-1})$, where $\Delta \le m$ is the maximum degree.
	Our main technical contribution is a ``look-ahead'' method
	that allows us to find \emph{higher-order extensions},
	minimal hitting sets that augment a given set with at least two more vertices.
	We show that this method can also be used to \emph{enumerate}
	all minimal hitting sets of a hypergraph with transversal rank $k^*$
	with delay $O(\Delta^{k^*-1} \nwspace mn^2)$.
	
	We then explore the possibility of further reducing the running time 
	for computing the transversal rank 
	to $\poly(m) \cdot n^{k+O(1)}$.
 	This turns out to be \emph{equivalent} to several breakthroughs
 	in combinatorial algorithms and enumeration.
	Among other things, such an improvement is possible if and only if $k$-conformal hypergraphs
	can also be recognized in time $\poly(m) \cdot n^{k+O(1)}$,
	and iff the maximal hypercliques/independent sets
	of a uniform hypergraph can be enumerated with incremental delay.
\end{abstract}

\ignore{%
	The \emph{transversal rank} of a hypergraph is the maximum size of its minimal hitting sets.
	Deciding, for an $n$-vertex, $m$-edge hypergraph and an integer $k$,
	whether the transversal rank is at least $k$ takes time $O(m^{k+1} n)$
	with an algorithm that is known since the 70s.
	It essentially matches an $(m+n)^{\Omega(k)}$ ETH-lower bound
	by Araújo, Bougeret, Campos, and Sau [Algorithmica 2023] and Dublois, Lampis, and Paschos [TCS 2022].
	Many hypergraphs seen in practice have much more edges than vertices, $m \gg n$.
	This raises the question whether an improvement of the run time dependency on $m$
	can be traded for an increase in the dependency on $n$.
	
	Our first result is an algorithm to recognize hypergraphs with transversal rank at least $k$
	in time $O(\Delta^{k-2} mn^{k-1})$, where $\Delta \le m$ is the maximum degree.
	Our main technical contribution is a ``look-ahead'' method
	that allows us to find \emph{higher-order extensions},
	minimal hitting sets that augment a given set with at least two more vertices.
	We show that this method can also be used to \emph{enumerate}
	all minimal hitting sets of a hypergraph with transversal rank $k^*$
	with delay $O(\Delta^{k^*-1} mn^2)$.
	
	We then explore the possibility of further reducing the running time 
	for computing the transversal rank 
	to $\textsf{poly}(m) \cdot n^{k+O(1)}$.
 	This turns out to be \emph{equivalent} to several breakthroughs
 	in combinatorial algorithms and enumeration.
	Among other things, such an improvement is possible if and only if $k$-conformal hypergraphs
	can also be recognized in time $\textsf{poly}(m) \cdot n^{k+O(1)}$,
	and iff the maximal hypercliques/independent sets
	of a uniform hypergraph can be enumerated with incremental delay.
}

\section{Introduction}
\label{sec:intro}


A plethora of results in graph algorithms discuss methods to decide
whether a given input belongs to a certain graph class.
We have algorithms that recognize planar graphs, graphs of treewidth at most 67,
or graphs with independence number at least 42.
For hypergraphs, however, much fewer methods are known
and most of the literature focuses on uniform hypergraphs,
where all hyperedges have the same number of vertices \cite{Abboud18MoreConsecuences,DorobiszKozik23HypergraphColoring,Fomin10IterativeCompression}, 
or, equivalently, binary matrices in which every column has the same number of 1s~\cite{Cooper19RankRandomMatrix}.
Our goal is to advance the area of recognition algorithms for classes of \emph{non-uniform} hypergraphs.
Specifically, we are interested in hypergraphs with large \emph{transversal rank},
which is the \emph{maximum} size of any \emph{inclusion-wise minimal} hitting set.\footnote{
	A \emph{transversal} is another word for hitting set.
	The \emph{transversal hypergraph} is the collection of all minimal hitting sets
	and the transversal rank is the maximum edge size of this hypergraph.
}
While our target are non-uniform hypergraphs,
our analysis leverages ideas from the uniform realm.
It also reveals deep connections to the problem of enumerating all minimal hitting sets.

The transversal rank was introduced to gauge the worst-case performance of ``local improvement'' methods for the hitting set problem. 
It found applications in areas like parameterized algorithms and kernelization~\cite{Hols22BlockingSets,Zehavi17MaximumMinimalVertexCover},
formal concept analysis~\cite{ColombNourine08KeysofFormalContext},
and computational biology~\cite{Damaschke11DoubleHypergraphDualization}.
Let \TransRank denote the problem of deciding,
for an $n$-vertex, $m$-edge hypergraph and a non-negative integer $k$,
whether there is a minimal hitting set with at least $k$ vertices.
Fittingly, the problem is also known under the name \textsc{Maximum Minimal Hitting Set}
\cite{Damaschke11DoubleHypergraphDualization}.
It generalizes many other ``max-min'' variants of classical optimization problems
including \textsc{Maximum Minimal Vertex Cover}~\cite{Boria15MaxMinVertexCover},
\textsc{Maximum Minimal Feedback Vertex Set}~\cite{Dublois22MaximumMinimalFVS},
\textsc{Maximum Minimal Blocking Set} (where all maximum independent sets of a graph need to be hit~\cite{Araujo23MaximumMinimalBlockingSetAlgorithmica}),
as well as \textsc{Upper Domination} (maximum minimal dominating set)
and its variant \textsc{Upper Total Domination}~\cite{Bazgan18FacetsofUpperDomination}.

The max-min variants are often harder then their underlying minimization problems.
Known lower bounds on the computational complexity of the \textsc{Upper Domination} problem
immediately transfer to \TransRank.
A result by Cheston, Fricke, Hedetniemi, and Jacobs~\cite{Cheston90UpperFractionalDomination}
thus shows that the problem is \NP-complete.
Bazgan, Brankovic, Casel, Fernau, Jansen, Klein, Lampis, Liedloff, Monnot, 
and Paschos~\cite{Bazgan18FacetsofUpperDomination}
proved that it is $\W[1]$-hard when parameterized by $k$
and that even approximating the transversal rank
within a factor of $n^{1-\varepsilon}$ is \NP-hard.
They further showed that
there is no algorithm for \TransRank running in time $(m + n)^{o(\sqrt{k} \nwspace)}$,
unless the Exponential Time Hypothesis (ETH) fails.
Recently, two independent works by Araújo, Bougeret, Campos, and Sau~\cite{Araujo23MaximumMinimalBlockingSetAlgorithmica} as well as
Dublois, Lampis, and Paschos~\cite{Dublois22UpperDominatingSetTCS}
improved this to $(m+n)^{o(k)}$.

Currently, the most efficient algorithm to recognize hypergraphs with large minimal hitting sets
follows from a result by Berge and Duchet~\cite{BergeDuchet75GilmoresTheorem} from the 70s.
Here, a \emph{Sperner} hypergraph is one in which no two edges are contained in one another.\footnote{
	Sperner hypergraphs are sometime also called \emph{simple}, e.g.\ in \cite{Berge89Hypergraphs},
	but this term is not used consistently in the literature, see~\cite{Frieze13Simple}.
}

\begin{lemma}[Berge and Duchet~\cite{BergeDuchet75GilmoresTheorem};
see Corollary~1, p.~58 of~\cite{Berge89Hypergraphs}]
\label{lem:BergeDuchet_subhypergraph}
	Let $\Hyp$ be a Sperner hypergraph and $k \ge 3$ an integer.
	Hypergraph $\Hyp$ has transversal rank at least $k$ if and only if
	there exists $k$ distinct edges $E_1, E_2, \dots, E_k \in \Hyp$
	such that every edge $E \in \Hyp$ has a vertex $x \in E$
	that appears in at most one of the $E_i$.
	Moreover, if $D$ denotes the set of all such $x$,
	then \emph{every} minimal hitting set $T \subseteq D$ has size at least $k$.
\end{lemma}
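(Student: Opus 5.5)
The plan is to prove the two directions separately, both using the standard characterization of minimality: a hitting set $T$ is minimal if and only if every $t \in T$ has a \emph{private edge} $E_t \in \Hyp$ with $E_t \cap T = \{t\}$. I expect the Sperner assumption and $k \ge 3$ to play no essential role in this argument; they are harmless here and presumably matter elsewhere.

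\emph{Forward direction.} Let $T$ be a minimal hitting set with $|T| \ge k$. Choose $k$ distinct vertices $t_1, \dots, t_k \in T$ and let $E_i := E_{t_i}$ be their private edges. These edges are pairwise distinct, because $E_i \cap T = \{t_i\}$ differ for different $i$. Now take an arbitrary edge $E \in \Hyp$ and some $x \in E \cap T$, which exists since $T$ is a hitting set. There are two cases.
\begin{itemize}
\item If $x = t_i$ for some $i$, then $x \in E_i$. Moreover $x \notin E_j$ for $j \ne i$, since $E_j \cap T = \{t_j\}$.
\item If $x \in T \setminus \{t_1,\dots,t_k\}$, then $x$ lies in no $E_j$, for the same reason.
\end{itemize}
In both cases $x$ appears in at most one of the $E_i$, which is the required condition.

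\emph{Backward direction and the ``moreover'' part.} Let $E_1, \dots, E_k$ be as in the statement and let $D$ be the set of vertices $x$ that appear in at most one $E_i$ (restricted to vertices of edges, as in the statement). By hypothesis every edge of $\Hyp$ meets $D$, so $D$ is a hitting set. Hence $D$ contains at least one minimal hitting set, because any hitting set can be pruned to a minimal one. It therefore suffices to prove the ``moreover'' claim, that every minimal hitting set $T \subseteq D$ has $|T| \ge k$. Fix such a $T$.
\begin{itemize}
\item Each $E_i$ must be hit, so pick $y_i \in T \cap E_i$.
\item Since $y_i \in D$ and $y_i \in E_i$, the vertex $y_i$ lies in no $E_j$ with $j \ne i$.
\item Consequently $y_i \ne y_j$ for $i \ne j$, since $y_j \in E_j$ but $y_i \notin E_j$.
\end{itemize}
Thus $T$ contains $k$ distinct vertices $y_1, \dots, y_k$, so $|T| \ge k$. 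This gives transversal rank at least $k$.

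The only delicate point is the bookkeeping in the forward direction: the witness $x$ must be drawn from $T$ itself, not chosen arbitrarily in $E$. That choice is what makes the private-edge property, $E_j \cap T = \{t_j\}$, rule out membership of $x$ in two of the $E_i$. Everything else is immediate from the definitions, so I do not anticipate a real obstacle.
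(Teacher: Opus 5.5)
Your proof is correct, and you are right that neither the Sperner assumption nor $k \ge 3$ is used: both the equivalence and the ``moreover'' clause hold for arbitrary hypergraphs and every $k \ge 1$.

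The paper does not prove this lemma itself; it cites it from Berge and Duchet. There it is the complement form of their characterization of $k$-conformal hypergraphs (their generalization of Gilmore's three-edge criterion), applied to $\overline{\Hyp}$. The translation rests on one observation: a vertex lies in at most one of $E_1,\dots,E_k$ exactly when it lies in at least $k-1$ of the sets $V{\setminus}E_i$. This is how the statement becomes \autoref{lem:BergeDuchet_conformal}. That route explains where the hypotheses come from, namely the conformality setting, with Gilmore's theorem as the case $k=3$. It also connects the lemma to the conformality and hyperclique viewpoint of \autoref{sec:equivalence}.

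Your private-edge argument instead gives a short, self-contained proof. In it, the ``moreover'' part is the core of the backward direction rather than an add-on. The paper uses the same kind of reasoning when extracting witnesses, for example in the proof of \autoref{thm:transversal_rank_algorithm}, where each $E_{x_i}$ can only be hit by $x_i$ inside the candidate set $M$.
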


For hitting sets, it does not make a difference whether the full hypergraph $\Hyp$
or only its inclusion-wise minimal edges are considered.
Finding those takes time $O(m^2 \nwspace n)$ and the resulting subhypergraph is Sperner,
allowing an application of \autoref{lem:BergeDuchet_subhypergraph}.
Given a collection $\{E_i\}_i$ of $k$ edges in $\Hyp$, 
checking whether the resulting set $D$ is a hitting set for $\Hyp$ can be done in time $O(mn)$.
Cycling through all $k$-subhypergraphs of $\Hyp$ thus gives a total time of $O(m^{k+1} \nwspace n)$.
We show in \autoref{app:intro_proofs}
how a more careful preprocessing can reduce the running time to
$O(m^{k+1} + m^k \nwspace n)$, which is never worse than $O(m^{k+1} \nwspace n)$.
Producing an actual witness for the large transversal rank,
namely, a \emph{minimal} hitting set of size at least $k$,
hardly takes any additional effort. 
Once a suitable set $D$ is found,
an arbitrary minimal solution $T \subseteq D$ can be computed in time $O(mn)$
(see also \autoref{app:intro_proofs}).

The algorithm matches the lower bound by Araújo et al.~\cite{Araujo23MaximumMinimalBlockingSetAlgorithmica}
as well as Dublois, Lampis, and Paschos~\cite{Dublois22UpperDominatingSetTCS}
up to constant factors in the exponent.
However, its large complexity in terms of $m$ makes it prohibitively
expensive for hypergraphs with many edges,
which is the typical setting in many practical applications~\cite{Benson18HigherOrderLinkPrediction,Birnick20HPIValid,Papenbrock15SevenAlgorithms,Stavropoulos16FrequentItemsetHiding}.
Note that in non-uniform hypergraphs (in constrast to graphs)
$m$ is not polynomially bounded in $n$.
ETH excludes the possibility that the exponents of $m$ and $n$
in the running time can be reduced to $o(k)$ \emph{simultaneously}.
Notwithstanding, the fine-grained trade-offs between the two
are entirely unexplored.
Running times of, say, $m^{k-10} \cdot \poly(n)$ and $O(m \cdot n^{3k/2})$
are both consistent with our current knowledge,
but would result in drastically different behavior in practice.

We investigate whether one can reduce the dependency on the number of edges $m$,
even if this may increase the complexity in terms of the number of vertices $n$.
Our first result is a faster algorithm for hypergraphs with very many edges,
or instances in which every vertex has only small degree.
Let $\Delta \le m$ denote the maximum degree.

\begin{restatable}{theorem}{transversalrankalg}
\label{thm:transversal_rank_algorithm}
	Hypergraphs with transversal rank at least $k$ can be recognized 
	in time $O(\Delta^{k-2} \nwspace mn^{k-1})$.
	For $k < 2$, the running time is $O(mn)$.
\end{restatable}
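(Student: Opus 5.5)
Since $\Hyp$ and its Sperner reduction have the same minimal hitting sets, and the reduction costs only $O(mn)$ per round via the preprocessing of \autoref{app:intro_proofs}, I assume $\Hyp$ is Sperner. The plan is to grow a candidate minimal hitting set vertex by vertex. The basic fact is the characterisation of minimality: a set $S$ is contained in some minimal hitting set if and only if every $v \in S$ has a \emph{private edge} $E_v \in \Hyp$ with $E_v \cap S = \{v\}$, and some hitting set $T \supseteq S$ keeps all these private edges private. This is the classical \textsc{Extension MinHS} condition. That problem is hard in general, but for $|T \setminus S|$ small it can be checked directly. The key simplification is that once $|S| = k-1$ is reached, any $k$-th vertex works, since every minimal hitting set extending a set of size at least $k-1$ that is not itself a hitting set has size at least $k$. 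So the recursion only needs to reach depth about $k-2$ and then perform one \emph{higher-order extension} test.

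\textbf{Search tree.} Start from $S=\emptyset$. While $|S| \le k-3$, pick an edge $E$ not hit by $S$. Some vertex $v \in E$ must join $S$, which gives at most $n$ choices. The private edge for $v$ must lie among the at most $\Delta$ edges containing $v$, which gives at most $\Delta$ choices. Fixing the private edges explicitly is what lets us avoid the general extension problem. Because $S$ is built from vertices of an unhit edge, every minimal hitting set containing a partial branch is reached, so the search is exhaustive. Each level multiplies the branching by $n\Delta$, and levels $1,\dots,k-2$ give $O((n\Delta)^{k-2})$ leaves. At each leaf we hold $S$ with $|S| = k-2$ and designated private edges $E_v$. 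We forbid all vertices in $\bigcup_v E_v \setminus S$ and must decide whether $S$ extends, using only allowed vertices, to a minimal hitting set with at least two more vertices, that is, to a set that is not already hitting.

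\textbf{Leaf test (look-ahead).} At a leaf, let $A$ be the allowed vertices. First check that $S \cup A$ hits every edge; if not, reject. If $S$ itself hits everything, reject, since then the set is too small. Otherwise, following the idea of \autoref{lem:BergeDuchet_subhypergraph}, any minimal hitting set $T \subseteq S \cup A$ with $T \supseteq S$ (or a minimal subset preserving the private edges) has size at least $k$ whenever at least two vertices of $A$ are needed. To make this check cost only $O(mn)$ per candidate, I guess one more vertex $u \in A$, giving a factor $n$. I then test whether some edge avoids $S \cup \{u\}$, which certifies that a second new vertex is needed. Finally, I greedily minimize $S \cup A$ while protecting the private edges in $O(mn)$ time. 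Total time: $(n\Delta)^{k-2} \cdot n \cdot O(m) = O(\Delta^{k-2} m n^{k-1})$, provided the per-$u$ check is $O(m)$ amortised, for example by computing edge counters once per leaf in $O(mn)$ and then updating them per $u$. For $k<2$, a single $O(mn)$ hitting-set computation suffices.

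\textbf{Main obstacle.} The difficult step is the correctness of the leaf test. Greedy minimization inside $S \cup A$ could delete vertices of $S$ or shrink the set to size at most $k-1$, and the guarantee analogous to the ``moreover'' part of \autoref{lem:BergeDuchet_subhypergraph} must be re-proved for partially fixed private edges. I would first show that, since the private edges are protected, minimization never removes vertices of $S$. Any resulting minimal $T$ then contains $S$ together with at least two vertices of $A$, because one added vertex $u$ fails to hit everything. Hence $|T| \ge k$.
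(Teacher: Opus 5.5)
Your search tree is sound: every minimal hitting set $T$ with $|T|\ge k$ reaches a leaf with $S\subseteq T$ and the private edges of $T$ as the $E_v$, hence $T\subseteq S\cup A$. The gap is in the leaf test, which is the actual difficulty. Knowing that \emph{some} $u\in A$ leaves an edge unhit by $S\cup\{u\}$ says nothing about the set you get by minimizing $S\cup A$. That set need not contain $u$. Worse, $A$ may contain another vertex $c$ such that $S\cup\{c\}$ is already a hitting set, and then minimization can return the $1$-extension $S\cup\{c\}$ of size $k-1$.

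A concrete failure: take $k=3$ and the path $\Hyp=\{\{s,a\},\{a,c\},\{c,u\}\}$, whose minimal hitting sets $\{s,c\},\{a,c\},\{a,u\}$ all have size $2$. At the leaf $S=\{s\}$, $E_s=\{s,a\}$ you get $A=\{c,u\}$. Here $S\cup A$ is a hitting set, $S$ is not, and $S\cup\{u\}$ misses $\{a,c\}$, so your test accepts. Yet minimizing $\{s,c,u\}$ yields $\{s,c\}$. Your closing inference (``one added vertex $u$ fails to hit everything, hence $T$ contains two vertices of $A$'') is exactly the invalid step. Re-checking $|T|\ge k$ after the greedy step would restore soundness but not completeness. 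The leaf belonging to a genuinely large $T$ can also contain such a completer $c$ in $A$, and the greedy may land on $S\cup\{c\}$.

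The missing ingredient is \autoref{lem:higher-order}. Let $C=\bigcap\{E\in\Hyp \mid E\cap S=\emptyset\}$, the set of vertices that complete $S$ on their own. A minimal hitting set $T\supseteq S$ has $|T\setminus S|\ge 2$ if and only if $T\cap C=\emptyset$. So the leaf test should be: reject if $S$ is a hitting set; otherwise accept iff $S\cup(A\setminus C)$ is a hitting set.

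Soundness: every minimal hitting set inside $S\cup(A\setminus C)$ contains $S$, since each $E_v$ meets that set only in $v$. It contains no vertex of $C$ and is not $S$ itself, so it is a witness of size at least $k$. Completeness: the large $T$ from your argument avoids $C$ by minimality, since $S\cup\{c\}\subsetneq T$ would be a hitting set.

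This costs $O(mn)$ per leaf with no guessing of $u$, giving $(n\Delta)^{k-2}\cdot O(mn)=O(\Delta^{k-2}mn^{k-1})$. It is the same as the paper's check that no edge disjoint from $S$ lies in $C\cup\bigcup_v E_v$. The paper simply iterates over all $(k-2)$-sets and all choices in $\prod_x \Hyp_x$ instead of branching on unhit edges, which is an inessential difference.

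Separately, drop the Sperner preprocessing. It is unnecessary here, and it costs $O(m^2n)$, not $O(mn)$, which already exceeds the claimed bound for $k=2$.
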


\noindent
When comparing \autoref{lem:BergeDuchet_subhypergraph} and
\autoref{thm:transversal_rank_algorithm}, the dependency on $m$ drops
from $m^{k+1}$ to $\Delta^{k-2} \nwspace m$,
which, however, increases the running time in terms of the number of vertices to $n^{k-1}$.
The trade-off is beneficial whenever $m  = \Omega( \nwspace (\Delta n)^{\frac{k-2}{k}})$.

Consider the extension problem for minimal hitting sets (\ExtHS),
where we have to check whether a given set of vertices $X$ is contained in a minimal hitting set.
Boros, Gurvich, and Hammer~\cite{Boros98Subimplicants} showed that the problem is \NP-complete,
and Bläsius, Friedrich, Lischeid, Meeks, and Schirneck~\cite{Blaesius22EfficientlyJCSS}
proved that it is $\W[3]$-complete when parameterized by the cardinality $|X|$ of the partial solution.
We investigate the variant that decides
whether there exists a minimal hitting set that contains $X$ 
and uses \emph{at least two more} vertices.
We call this a \emph{higher-order extension}.
The central contribution of the first part of our work is an algorithm 
that allows us to \emph{look ahead} in the search for higher-order extensions of $X$.
It runs in the same time as the best-known method for the original extension problem.
This is the key ingredient to establish \autoref{thm:transversal_rank_algorithm}.
Along the way, we identify the higher-order variant as the hard core of the extension problem.
This immediately shows that it is also \NP- and $\W[3]$-complete and implies several fine-grained lower bounds.
(See \autoref{sec:overview} for details.)

The look-ahead idea also has applications in the \emph{enumeration} of minimal hitting sets.
Finding a \emph{minimum-cardinality} hitting set is classically \NP-hard~\cite{Karp72Reducibility},
but a simple greedy algorithm can produce a minimal one.
Between the two extremes is the task of enumeration, that is, computing and listing
\emph{all} minimal hitting sets of a hypergraph.
There can be exponentially many minimal solutions, 
so a polynomial algorithm is outright impossible.
The hope would be to achieve \emph{output-polynomial} time,
scaling polynomially with the input size and the number of solutions.
It is \emph{the} major open problem in enumeration
whether the so-called \emph{transversal hypergraph problem}
admits an output-polynomial algorithm~\cite{CapelliStrozecki19Incremental,DemetrovicsThi87Antikeys,Mannila87Dependency,Reiter87DiagnosisFirstPrinciples}.

While answering this question seems to be out of reach of our current techniques,
a lot of work has been dedicated to identify classes of hypergraphs for which this problem
is tractable, see e.g.~\cite{EiterGottlob95RelatedProblems,EiterGottlobMakino03NewResults,%
EiterMakinoGottlob08Survey,Khachiyan07BoundedEdgeIntersection}.
For hypergraphs for which the transversal rank $k^*$ is bounded,
even better than mere output-polynomial methods are possible.
The \emph{delay} is the time between consecutive outputs.
This includes the preprocessing before the first solution and
the time after the last one until termination.
Bläsius et al.~\cite{Blaesius22EfficientlyJCSS} used the \ExtHS problem
to devise an algorithm with delay $O(m^{k^*+1} \nwspace n^2)$,
The crucial property of the algorithm is
that it does \emph{not} need to know $k^*$ in advance,\footnote{
	If $k^*$ were known and indeed a constant, 
	merely brute-forcing all sets up to size $k^*$
	would give a trivial polynomial (thus output-polynomial) algorithm.
}
only the analysis depends on $k^*$.
Araújo et al.~\cite{Araujo23MaximumMinimalBlockingSetAlgorithmica}
later observed that the bound can be tightened to $O(\Delta^{k^*} mn^2)$,
which is more efficient if the degrees are small,
but gives the worst case running time if $\Delta = \Omega(m)$.
Our improvement regarding the higher-order extensions allows
us to break the $\Omega(m^{k^*+1}) \cdot \poly(n)$ delay barrier
for general hypergraphs.

\begin{restatable}{theorem}{improveddelay}
\label{thm:improved_delay}
	The minimal hitting sets of a hypergraph with transversal rank $k^*$ can be enumerated
	with delay $O(\Delta^{k^*-1} \nwspace mn^2)$.
	The algorithm uses space $O(mn)$.
\end{restatable}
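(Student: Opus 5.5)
We follow the flashlight-search (backtracking) framework of Bläsius et al.~\cite{Blaesius22EfficientlyJCSS} and replace its extension oracle by the look-ahead method for higher-order extensions.
Fix an ordering $v_1,\dots,v_n$ of the vertices and traverse a binary search tree whose nodes are pairs $(X,Y)$. Here $X$ is the set of chosen vertices and $Y$ the set of excluded vertices, both among $v_1,\dots,v_i$. A node is kept only if some minimal hitting set $T$ satisfies $X\subseteq T$ and $T\cap Y=\emptyset$; call such a node extendable. Leaves at depth $n$ are output. Since every kept node has a kept leaf below it, the delay is $O(n)$ times the cost of one extendability test, plus bookkeeping. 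Only the current root-to-leaf path and the current $X,Y$ need to be stored, which gives space $O(mn)$.

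The standard test for whether $X$ extends (avoiding $Y$) works as follows. $X$ extends iff there is a choice of ``private'' edges $E_x$, one for each $x\in X$, with $E_x\cap X=\{x\}$, such that the set $D$ of vertices outside $Y$ hitting none of the $E_x\setminus\{x\}$ together with $X$ hits every edge. Naively this costs $\prod_x \deg(x)\cdot O(mn)\le \Delta^{|X|}mn$. Along a single path we can always choose which extendability question is asked. The plan is to use the look-ahead result from the first part of the paper: ``does $X$ have a higher-order extension, i.e.\ a minimal hitting set $T\supseteq X$, $T\cap Y=\emptyset$, with $|T|\ge |X|+2$?'' This question is assumed to be decidable in the same time as the ordinary test for $|X|$, namely $O(\Delta^{|X|}mn)$ up to lower-order factors. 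The case where $X$ needs exactly one more vertex is handled directly: $X\cup\{v\}$ is a minimal hitting set avoiding $Y$ iff it hits all edges and every element has a private edge. This can be checked for all candidate $v$ in $O(mn)$ total time via counting how many times each edge is hit.

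The key accounting step comes next. Whenever the node $(X,Y)$ is tested, $X$ is a proper subset of some minimal hitting set, so $|X|\le k^*-1$. The ordinary test therefore costs at most $\Delta^{k^*-1}mn$. The critical case is $|X|=k^*-1$. Here any extension has size exactly $k^*$, which is exactly one more vertex, so the $O(mn)$ direct check suffices. A higher-order extension would exceed the transversal rank, so no costly test is needed; the algorithm does not know $k^*$, but it can run the look-ahead query, which is cheap relative to the bound. More precisely, at each node the algorithm runs the cheap one-vertex check and the look-ahead query for $|X|+2$. The look-ahead query costs $\Delta^{|X|}mn$ and is only ever issued when $|X|+2\le k^*$ could possibly succeed. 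Queries with $|X|\ge k^*-1$ answer ``no'' but still cost $\Delta^{|X|}$.

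The main obstacle is to ensure that these failing queries stay within $\Delta^{k^*-1}$, since $X$ on the path never exceeds $k^*$ elements; in fact $|X|\le k^*-1$ at every internal test. I would first check that the look-ahead routine, applied at $|X|=k^*-1$, indeed runs in $O(\Delta^{k^*-1}mn)$, or better, that it can terminate early once the private-edge choices certify that no vertex outside can be added. Summing over the $n$ levels gives $n\cdot O(\Delta^{k^*-1}mn)=O(\Delta^{k^*-1}mn^2)$ delay. Preprocessing to the Sperner subhypergraph and incremental maintenance of hit-counts fit into this budget.
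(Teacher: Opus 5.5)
\textbf{Gap: the tree you specify does not keep $|X|\le k^*-1$.} The gap is exactly at the step you call the main obstacle. The tree you specify is the plain flashlight search of Bl\"asius et al.: a node is kept iff it is extendable, and the outputs are the leaves at depth $n$. In that tree, the bound $|X|\le k^*-1$ at every tested node is false.

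Take a kept node $(A,B)$ with $|A|=k^*-1$ whose extensions are all 1-extensions. The search still creates and tests the child $(A\cup\{v_{\min}\},B)$, which has $k^*$ elements. If $A\cup\{v_{\min}\}$ happens to be neither redundant nor a hitting set, both the ordinary extension test and the look-ahead query must scan up to $\prod_{x\in A\cup\{v_{\min}\}}|\Hyp^*_x|=O(\Delta^{k^*})$ combinations before answering ``no''. That is precisely the $\Delta^{k^*}mn$ bottleneck you are trying to remove.

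Your justification, that $X$ is a proper subset of some minimal hitting set whenever $(X,Y)$ is tested, does not hold. A node is tested precisely to find out whether it is extendable; only its parent is known to be. Checking that the look-ahead costs $O(\Delta^{k^*-1}mn)$ at $|X|=k^*-1$ (which is just the assumed bound), or that it stops early, does not help either. What must be excluded is that any node with $|X|\ge k^*$ is ever evaluated. Running the look-ahead at every node without letting its answer reshape the tree cannot achieve this.

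\textbf{Missing idea: let the look-ahead control the branching.} This is what \autoref{lem:technical_delay} formalizes.
At $(X,Y)$, output the 0- and 1-extensions directly. Create children only if the look-ahead certifies a higher-order extension $T\in\Tr(\Hyp)[X,Y]$. Then the child $X\cup\{v_{\min}\}$ has at most $|X|+1\le|T|-1\le k^*-1$ elements, so every evaluated node costs $O(\Delta^{k^*-1}mn)$.

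Duplicates also need handling. So that the subtree does not list the 1-extensions a second time, pass $Y\cup S$ with $S=\bigcap(\Hyp^{V\setminus X})^*$ to the children; the paper also adds $U$. By \autoref{cor:higher-order}, this loses no higher-order extension, and every 1-extension contains a vertex of $S$.

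With this rule, each branching node has an output in its subtree. That statement replaces your ``every kept node has a kept leaf below it''. Your count of $O(n)$ evaluated nodes between consecutive outputs and your $O(mn)$ space bound then go through as in the paper.
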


The second part of this work explores the question whether the dependency on $m$
in the running time to compute the transversal rank can be reduced beyond \autoref{thm:transversal_rank_algorithm}.
Ideally, we would like it to be polynomial in $m$,
with an exponent that is independent of $k$.
Of course, we cannot expect to achieve complexity $n^{o(k)}$ at the same time.
Instead, we investigate the consequences of a method with running time $\poly(m) \cdot n^{k+O(1)}$.
As it turns out, such an improvement
is \emph{equivalent} to several other breakthroughs in combinatorial algorithms 
for conformal hypergraphs as well as the enumeration of maximal hypercliques.

Consider a set of vertices $S$ for which there exists a hyperedge $E \,{\in}\, \Hyp$ 
with $S \,{\subseteq}\, E$.
Evidently, each $k$-subset of $S$ is also contained in an edge of $\Hyp$.
The hypergraph is called $k$-\emph{conformal} if the opposite is true,
that is, all $k$-subsets of $S$ being contained in some edge 
is already sufficient for the existence of an edge that contains all of $S$.
The \emph{conformal degree} of $\Hyp$ is the smallest $k$ for which $\Hyp$ is $k$-conformal.
Already Berge and Duchet~\cite{BergeDuchet75GilmoresTheorem} observed
that there is a close connection between hypergraphs with large transversal rank
and large conformal degree (see \autoref{lem:BergeDuchet_conformal}).
It is thus not surprising that recognizing $k$-conformal hypergraphs is \co\NP-complete,
and $\co\W[1]$-hard when parameterized by $k$
\cite{Bazgan18FacetsofUpperDomination,ColombNourine08KeysofFormalContext}.
A \emph{hyperclique} in a $k$-uniform hypergraph $\Gyp$ is a set $S$
such that each $k$-subset of $S$ is \emph{equal} to an edge of $\Gyp$.

We extend the connection between transversal rank and conformal degree to enumeration.
We say the solutions to a combinatorial problem can be listed with \emph{incremental delay}
if the $i$-th delay, the time between the $i$-th and $(i{+}1)$-st output,
is bounded by a polynomial of the input size and $i$.
We reserve the variable $i$ for the number of solutions seen so far throughout this work.
In general, the notion of incremental delay lies between output-polynomial time and polynomial delay,
but for hitting sets
it is known that any output-polynomial algorithm can be turned
into one with incremental delay \cite{Bioch95Identification}.
This includes hypergraphs whose \emph{rank}  $r$ (the maximum edge size) is bounded by a constant.
However, like for the transversal rank,
a fine-grained discussion of the $\poly(m,n,i)$-delay is missing from the literature.
A method with delay $i^3 \cdot n^{r+O(1)}$ would be preferred over
one with delay $O(i^{r+2} \cdot n)$, 
the latter being the current state of the art~\cite{EiterGottlob95RelatedProblems}.
We prove a quantitative equivalence between the computation of the conformal degree,
the enumeration of minimal hitting sets in rank-bounded hypergraphs,
and the enumeration of maximal hypercliques.
Of course, everything that is said about hypercliques also applies to independent sets
by swapping the role of edges and non-edges.

\begin{restatable}{theorem}{mainequivalence}
\label{thm:main_equivalence}
	The following statements are equivalent.
	\vspace*{.25em}
	\begin{itemize}
		\item Hypergraphs with transversal rank at least $k$ can be recognized 
			in time $\poly(m) \cdot n^{k+O(1)}$.
		\vspace*{.25em}
		\item Hypergraphs that are $k$-conformal can be recognized
			in time $\poly(m) \cdot n^{k+O(1)}$.				
		\vspace*{.25em}
		\item The minimal hitting sets of hypergraphs with rank $r$ can be enumerated
			with incremental delay $\poly(i) \cdot n^{r+O(1)}$.
		\vspace*{.25em}
		\item The maximal hypercliques/independent sets of uniform hypergraphs with rank $r$ can be enumerated
			with incremental delay $\poly(i) \cdot n^{r+O(1)}$.
	\end{itemize}
	\vspace*{.5em}
\end{restatable}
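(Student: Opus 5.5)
We establish the equivalences as a cycle (1)$\Rightarrow$(2)$\Rightarrow$(4)$\Rightarrow$(3)$\Rightarrow$(1). Here (1)--(4) denote the four items of \autoref{thm:main_equivalence} in order.

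\emph{(1)$\Rightarrow$(2).} We would use the Berge--Duchet connection (\autoref{lem:BergeDuchet_conformal}). A hypergraph $\Hyp$ fails to be $k$-conformal iff some set $S$ is not contained in an edge while all its $k$-subsets are. A minimal such $S$ has size $k+1$, and all of its $k$-subsets lie in edges. Minimal non-covered sets whose proper subsets are all covered are exactly the minimal hitting sets of the complement hypergraph $\overline{\Hyp}=\{V\setminus E : E\in\Hyp\}$. So $\Hyp$ is $k$-conformal iff every minimal transversal of $\overline{\Hyp}$ has size at most $k$, i.e.\ iff the transversal rank of $\overline{\Hyp}$ is less than $k+1$. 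The complement has the same $n$ and $m$ and is built in $O(mn)$ time. Deciding transversal rank at least $k+1$ costs $\poly(m)\cdot n^{k+1+O(1)}=\poly(m)\cdot n^{k+O(1)}$, so (2) follows from (1).

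\emph{(2)$\Rightarrow$(4).} We follow the classical incremental scheme for maximal hypercliques in an $r$-uniform $\Gyp$. Let $\mathcal{C}$ be the family of maximal hypercliques found so far; initially it contains one clique obtained greedily. The family $\mathcal{C}$ is complete iff the hypergraph $\mathcal{C}$ (edges are the cliques) is $r$-conformal with respect to $\Gyp$. Indeed, every hyperclique's $r$-subsets are edges of $\Gyp$, and each edge of $\Gyp$ lies in some maximal clique. To make this precise, we add the edges of $\Gyp$ to $\mathcal{C}$. Then a set all of whose $r$-subsets are covered is a clique, and conformality says it lies in a listed clique. We run the conformality decider on $\mathcal{C}\cup\Gyp$ in time $\poly(i+n^r)\cdot n^{r+O(1)}$. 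If the answer is no, we need a witness; we obtain it by self-reducibility. We delete vertices one at a time, keeping non-conformality, with $n$ oracle calls; alternatively we use the witness produced by the transversal rank search. The witness is a clique not covered by $\mathcal{C}$, which we extend greedily to a new maximal clique. This gives delay $\poly(i)\cdot n^{r+O(1)}$, absorbing $m\le n^r$ into $n^{r+O(1)}$.

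\emph{(4)$\Rightarrow$(3).} Given $\Hyp$ of rank $r$, we use the standard reduction to maximal independent sets. We build the $r$-uniform hypergraph $\Gyp$ whose edges are the $r$-sets containing some edge of $\Hyp$; one can pad with dummy vertices if needed. Complements of maximal independent sets of $\Gyp$ are exactly the minimal transversals of $\Hyp$. The vertex count grows only by $O(r)$, so the delay bound transfers.

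\emph{(3)$\Rightarrow$(1).} This step is the main obstacle, because (1) has no rank bound and the exponent must be $k$, not the rank. We would use \autoref{lem:BergeDuchet_conformal} to turn transversal rank at least $k$ into non-$(k-1)$-conformality of the complement hypergraph. We then test conformality by enumeration. Let $\Gyp$ be the $(k-1)$-uniform hypergraph of $(k-1)$-sets covered by edges of $\overline{\Hyp}$; it has $n^{k-1}$ edges, computed in $m\,n^{k-1}$ time. The hypergraph is $(k-1)$-conformal iff every maximal hyperclique of $\Gyp$ lies in an edge. We enumerate the maximal hypercliques of $\Gyp$, using (3) via the reduction above with rank $k-1$. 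Each output is checked against the edges in $O(mn)$ time. If all the first $m+1$ cliques are covered, then since distinct maximal cliques lie in distinct maximal edges, at most $m$ cliques exist and enumeration ends. So at most $m+1$ outputs are needed, costing $\poly(m)\cdot n^{k-1+O(1)}$, which yields (1). The delicate points are to bound the number of outputs by $m+1$ through the injection from cliques to maximal edges, and to check the index shift between rank and conformal degree.
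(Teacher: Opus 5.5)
The cycle fails at both links that go from decision back to enumeration: (2)$\Rightarrow$(4) and (4)$\Rightarrow$(3).

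\textbf{What works.} Your links (1)$\Rightarrow$(2) and (3)$\Rightarrow$(1) are fine. The latter is essentially the paper's chain (3)$\Rightarrow$(4)$\Rightarrow$(2)$\Rightarrow$(1), provided the hypercliques of the $(k-1)$-section are obtained by complementing it ($\mathcal{G}^{\mathsf{C}}$), not through your uniformization. A minimal witness against $k$-conformality has size \emph{at least} $k+1$, not exactly $k+1$. Your conclusion via minimal transversals of $\overline{\mathcal{H}}$ is still right.

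\textbf{Gap in (2)$\Rightarrow$(4).} You run the conformality oracle on $\mathcal{C}\cup\mathcal{G}$, which has up to $i+\binom{n}{r}$ edges. The cost is therefore $(i+n^r)^{c}\cdot n^{r+O(1)}$, i.e.\ $n^{(c+1)r+O(1)}$ for dense $\mathcal{G}$ already at $i=1$. A factor $m\le n^r$ inside $\poly(m)$ cannot be absorbed into $n^{r+O(1)}$, and keeping the exponent at $r+O(1)$ is the whole point of the statement.

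The test is also incomplete. A maximal hyperclique that is a single edge $e\in\mathcal{G}$ is covered by $e$ itself in $\mathcal{C}\cup\mathcal{G}$, so it is never reported. For $r=2$, take a triangle plus a disjoint edge, after only the triangle has been listed.

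\textbf{Gap in (4)$\Rightarrow$(3).} Your uniformization is wrong for non-uniform $\mathcal{H}$. Every $(r-1)$-set containing an edge of $\mathcal{H}$ is a maximal independent set of $\mathcal{G}$: it contains no $r$-set, and every one-vertex extension is an edge of $\mathcal{G}$. Yet its complement misses that edge. For $\mathcal{H}=\{\{1\},\{2,3,4\}\}$ and $r=3$, the set $\{1,2\}$ is maximal independent while $\{3,4\}$ is not a transversal. Dummy vertices do not remove such sets. There can be order $n^{r-1}$ of them (every edge of size $r-1$ is one), so filtering them inflates the index of the assumed enumerator and destroys the $\poly(i)\cdot n^{r+O(1)}$ delay.

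\textbf{The missing idea.} Call the oracle only on the $i$ solutions found so far, and handle everything involving $\mathcal{H}$ or all $r$-sets by direct checks that are linear in $n^{r}$. The paper does this with the Eiter--Gottlob characterization. For the family $\mathcal{F}$ of minimal transversals found so far, $\mathcal{F}=\Tr(\mathcal{H})$ iff $\mathcal{F}\subseteq\Tr(\mathcal{H})$, $\Tr(\mathcal{F})|_r\subseteq\mathcal{H}$, and $\rank(\Tr(\mathcal{F}))\le r$.
\begin{itemize}
\item The first two conditions are checked by brute force in time $i\cdot n^{r+O(1)}$.
\item Only the third needs the transversal rank oracle, and it runs on an instance with just $i$ edges.
\item Any violating minimal transversal $S$ of $\mathcal{F}$ with $S\notin\mathcal{H}$ yields a new solution: an arbitrary minimal transversal of $\mathcal{H}$ inside $V\setminus S$.
\end{itemize}
This proves (1)$\Rightarrow$(3) for arbitrary rank-$r$ hypergraphs. (3)$\Rightarrow$(4) is then just the complementation step for uniform hypergraphs.

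Your own (2)$\Rightarrow$(4) can be repaired in the same spirit. First check directly that every edge of $\mathcal{G}$ lies in a listed clique, extending an uncovered one greedily. Then test $r$-conformality of $\mathcal{C}$ alone, which has only $i$ edges. Even so, you still need a correct route to (3). Via Berge--Duchet, that route is exactly the Eiter--Gottlob argument above.
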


On the one hand, the equivalence shows how faster enumeration algorithms in uniform hypergraphs
can help with decision problems also for non-uniform inputs.
On the other hand, it opens new inroads to prove (conditional) delay lower bounds,
a notoriously hard problem in enumeration complexity~\cite{CapelliStrozecki19Incremental}.
Ruling out fast listing algorithms for minimal hitting sets/maximal hypercliques in uniform hypergraphs
now reduces to proving lower bounds on the \TransRank problem.
We point out that the hypergraphs in the first two items of \autoref{thm:main_equivalence}
do \emph{not} need to have bounded rank.
Indeed, if $r$ is constant, Araújo et al.~\cite{Araujo23MaximumMinimalBlockingSetAlgorithmica} 
showed how to compute the transversal rank much faster
in time $2^{rk} \cdot \poly(m,n)$.

Traditionally, the enumeration of minimal hitting sets has been investigated
using a different kind of decision problem, namely,
given two hypergraphs $\Hyp$ and $\Gyp$,
verifying whether $\Gyp$ consists precisely of the minimal hitting sets of $\Hyp$.
This problem is known to be solvable in polynomial time (in the combined input size of $\Hyp$ and $\Gyp$)
if and only if the minimal hitting sets can be enumerated
with incremental delay~\cite{Bioch95Identification}.
We strengthen this equivalence in the case of rank-$r$ hypergraphs $\Hyp$.

\begin{restatable}{theorem}{secondequivalence}
\label{thm:second_equivalence}
	Let $\Hyp$ be a hypergraph with rank $r$.
	The following statements are equivalent.
	\vspace*{.25em}
	\begin{itemize}
		\item The minimal hitting sets of $\Hyp$ can be enumerated
			with incremental delay $\poly(i) \cdot  n^{r+O(1)}$.\vspace*{.25em}
		\item For any hypergraph $\Gyp$, deciding whether it contains all and only the minimal hitting sets of $\Hyp$ can be done in time $\poly(|\Gyp|) \cdot n^{r+O(1)}$.
	\end{itemize}
	\vspace*{.5em}
\end{restatable}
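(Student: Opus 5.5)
We prove both directions separately, following the classical argument of Bioch and Ibaraki~\cite{Bioch95Identification} and keeping track of the exponent of $n$. Throughout, $\Hyp$ has rank $r$, so $m \le \binom{n}{\le r} = O(n^r)$. The minimal edges of $\Hyp$ can therefore be computed in time $n^{r+O(1)}$, and we may assume $\Hyp$ is Sperner. Write $\Tr(\Hyp)$ for the set of minimal hitting sets.

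\emph{Enumeration $\Rightarrow$ verification.} We are given $\Gyp$ with $|\Gyp| = g$. We first check in time $\poly(g) \cdot n^{O(1)} \cdot m$ that every edge of $\Gyp$ is a minimal hitting set of $\Hyp$. This includes checking minimality: for each vertex $v \in G$, test whether $G\setminus\{v\}$ still hits every edge. Then $\Gyp \subseteq \Tr(\Hyp)$ after removing duplicates. Next we run the assumed enumeration algorithm. Its $i$-th delay is at most $p(i)\cdot n^{r+c}$ for a fixed polynomial $p$ and constant $c$. We stop it as soon as it has produced $g+1$ distinct solutions, or after its total budget $\sum_{i\le g+1} p(i)\, n^{r+c} = \poly(g)\cdot n^{r+O(1)}$ has been spent. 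If it halts having output exactly $g$ solutions, then $\Gyp = \Tr(\Hyp)$. If it outputs a $(g{+}1)$-st solution, the answer is no. The one subtlety is that "halting within the budget" must certify completeness. This holds because the delay bound also covers the time after the last output until termination.

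\emph{Verification $\Rightarrow$ enumeration.} We use the standard incremental scheme. We maintain the list $\Gyp_i$ of the $i$ solutions found so far and call the verifier on $(\Hyp,\Gyp_i)$, at cost $\poly(i)\cdot n^{r+O(1)}$. If it answers yes, we halt. Otherwise we must actually produce a new minimal hitting set, and this is the main obstacle, since the verifier only gives a yes/no answer. Our plan is self-reducibility via the duality $\Tr(\Tr(\Hyp)) = \Hyp$.

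Since $\Gyp_i \subsetneq \Tr(\Hyp)$, there is a set $E$ that hits every edge of $\Gyp_i$ but contains no edge of $\Hyp$. Equivalently, $V \setminus E$ is a hitting set of $\Hyp$ that contains no member of $\Gyp_i$, so any minimal hitting set inside $V\setminus E$ is new. To locate such a witness we query the verifier on modified instances in which vertices are fixed in or out. Fixing $v$ out means deleting $v$ from $V$ and from every edge of $\Hyp$. This keeps the rank at most $r$, so the $n^{r+O(1)}$ bound is preserved. It remains to argue that the restricted family $\Gyp_i$, projected or filtered correspondingly, is the right candidate list for the subinstance.

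Concretely, we will try to show that for a vertex $v$, $\Tr(\Hyp)$ has a member avoiding $\Gyp_i$ within a given restriction if and only if one of two derived instances (one with $v$ forced in, one with $v$ forced out) is a no-instance for the verifier. This yields a branching of depth $n$ with two queries per level, i.e.\ $O(n)$ verifier calls of size $\poly(i)$. We finish by a greedy minimization in time $O(mn) = n^{r+O(1)}$. If the forcing-in operation is awkward to express, the fallback is to use the dual instance instead: run the verifier on $(\Gyp_i,\Hyp)$-type pairs, where $\Hyp$ has rank $r$ and the search for a new transversal of $\Gyp_i$ is controlled by $r$. Either way, the $i$-th delay is $\poly(i)\cdot n^{r+O(1)}$.
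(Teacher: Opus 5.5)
Your enumeration-to-verification direction is correct. It is also more direct than the paper's route, which goes through \TransRank and \autoref{lem:EiterGottlob} (i.e., the chain of \autoref{thm:main_equivalence}), and your simulation works even for a single fixed $\Hyp$.

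The gap is in the converse direction. You rightly note that a yes/no verifier does not hand you a new solution; the paper sidesteps this by assuming its black box returns an edge of $(\Gyp\setminus\Tr(\Hyp))\cup(\Tr(\Hyp)\setminus\Gyp)$. However, your self-reduction is only announced, and the step it rests on fails as stated: forcing a vertex \emph{in} has no evident encoding as a verifier instance. The family $\Tr(\Hyp)[X,Y]$ is not the transversal hypergraph of any obvious rank-$r$ rewriting of $\Hyp$, because besides hitting $\Hyp^{V\setminus X}$ minimally, every $x\in X$ needs a private edge. If it were $\Tr(\Hyp')$ for an easily computed $\Hyp'$, its non-emptiness would be trivial to decide ($\emptyset\notin\Hyp'$), whereas this is exactly the extension problem \ExtHS. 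So your two-query branching cannot be evaluated. The fallback with $(\Gyp_i,\Hyp)$-type pairs does not help either: the hypothesis only covers checking the transversals of a hypergraph of rank at most $r$, and $\Gyp_i$ has unbounded rank.

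The missing idea is that forcing vertices \emph{out} suffices once the branching is replaced by a greedy pass.
\begin{itemize}
\item Keep a set $Y$, initially empty, with the invariant that some $T\in\Tr(\Hyp)\setminus\Gyp_i$ avoids $Y$.
\item For each vertex $v$ in turn, ask the verifier whether $\Gyp'=\Tr(\Hyp')$, where $\Hyp'=\{E\setminus(Y\cup\{v\}) : E\in\Hyp\}$ (still of rank at most $r$) and $\Gyp'=\{G\in\Gyp_i : G\cap(Y\cup\{v\})=\emptyset\}$.
\item Since $\Tr(\Hyp')=\Tr(\Hyp)[\emptyset,Y\cup\{v\}]\supseteq\Gyp'$, the answer is negative exactly if a new solution avoids $Y\cup\{v\}$; in that case add $v$ to $Y$.
\item Otherwise, every new solution avoiding the current $Y$, and hence every later $Y$, contains $v$. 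So you never have to encode ``$v$ is in''.
\end{itemize}
After the pass, the new solution promised by the invariant contains $V\setminus Y$ and avoids $Y$, so it equals $V\setminus Y$, which you output. This costs $n+1$ verifier calls with at most $i$ candidate sets each, giving $i$-th delay $\poly(i)\cdot n^{r+O(1)}$. Note that it requires the verifier for all hypergraphs of rank at most $r$ (the reduced $\Hyp'$), not only for the fixed $\Hyp$; this class-wise reading is the natural one for the statement.
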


The heart of \autoref{thm:main_equivalence} is to show
that recognizing hypergraphs with large transversal rank
is the same as quickly enumerating the maximal hypercliques 
of a uniform hypergraph with suitably chosen rank.
For simple graphs (2-uniform hypergraphs),
there are non-trivial clique enumeration algorithms.
Tsukiyama, Ide, Ariyoshi, and Shirakawa~\cite{Tsukiyama77GeneratingAllTheMaxIndSets}
and later Johnson, Papadimitriou,  and Yannakakis~\cite{Johnson88MaxIndSet}
showed how to enumerate the maximal cliques of a graph $G = (V,E)$
with delay $O(|V| \,{\cdot}\, |E|)$.
We plug this into our framework
to obtain an algorithm to recognize conformal (i.e., 2-conformal) hypergraphs.
It improves over the previous $O(m^4 \nwspace n)$-time procedure based on \autoref{lem:BergeDuchet_subhypergraph}.

\begin{restatable}{theorem}{conformalitytest}
\label{thm:conformality_test}
	Conformal hypergraphs can be recognized in time $O(mn^3)$.
	\vspace*{.5em}
\end{restatable}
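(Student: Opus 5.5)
Conformality is the case $k=2$: $\Hyp$ is conformal iff every set $S$ whose pairs are all covered by edges is itself covered by an edge. Let $G$ be the $2$-section (primal graph) of $\Hyp$: vertex set $V$, and $\{u,v\}$ an edge iff some $E\in\Hyp$ contains both $u$ and $v$. The condition says exactly that every clique of $G$ lies in an edge of $\Hyp$. Since subsets of covered sets are covered, it suffices to check the \emph{maximal} cliques of $G$. The test is therefore: enumerate the maximal cliques of $G$ and check for each whether some edge of $\Hyp$ contains it.

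The running time works as follows. Build $G$ by marking, for every edge $E$, all pairs inside $E$ in an $n\times n$ adjacency matrix; this costs $O(\sum_E |E|^2)\le O(mn^2)$. Checking one clique $C$ against all edges costs $O(mn)$ using characteristic vectors. The danger is the number of maximal cliques, which may be exponential. Here is the key observation: every edge $E\in\Hyp$ is a clique of $G$, so it lies in some maximal clique. If $\Hyp$ is conformal, every maximal clique $C$ is contained in some edge $E$, and $E$ is a clique containing $C$, so $C=E$ by maximality. Hence in a conformal hypergraph the maximal cliques of $G$ are exactly the maximal edges of $\Hyp$, and there are at most $m$ of them.

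The algorithm runs the Tsukiyama et al./Johnson--Papadimitriou--Yannakakis enumeration with delay $O(|V|\cdot|E(G)|)=O(n^3)$. After each output clique $C$, it tests in $O(mn)$ whether $C$ is contained in an edge. If not, it halts and rejects, since $C$ witnesses non-conformality. If it has processed $m+1$ cliques without failure, it also rejects, because a conformal hypergraph has at most $m$ maximal cliques. Otherwise the enumeration terminates and the algorithm accepts. At most $m+1$ cliques are examined, each costing $O(n^3+mn)$, for a total of $O(mn^3+m^2n)$. In fact the counter is unnecessary: the first clique not contained in an edge already triggers rejection, and every clique that passes the test equals an edge, distinct cliques giving distinct edges, so at most $m$ cliques pass. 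To reach exactly $O(mn^3)$, the containment test should avoid the $O(mn)$ scan. Since each passing clique must equal a maximal edge, store the edges in a trie or hash keyed by sorted vertex lists, after removing duplicates in a preprocessing step. Then test whether $C$ itself is an edge in $O(n)$, which by the argument above is equivalent to containment for maximal cliques of conformal inputs. If $C$ is not an edge, $C$ is either uncovered or strictly inside an edge, and the latter contradicts maximality since edges are cliques; so rejecting is correct. Preprocessing costs $O(mn\log)$ or $O(mn)$ with radix sorting.

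The main obstacle is this bookkeeping: the enumeration must be stopped early, and the per-clique check must be $O(n^3)$ rather than $O(mn)$. The structural lemma that maximal cliques coincide with edges in the conformal case is what bounds the number of enumeration steps by $m$.
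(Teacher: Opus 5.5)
Your proposal is correct and is essentially the paper's proof. You build the $2$-section and use that $\Hyp$ is conformal iff every maximal clique of it is an edge (\autoref{lem:conformality}); you run the $O(n^3)$-delay clique enumeration, stop after at most $m+1$ cliques, and answer membership queries with a precomputed dictionary. Your remark that the counter is redundant, since passing cliques are distinct edges, is a harmless refinement.

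One small fix is needed. As literally stated, ``every clique of $G$ lies in an edge'' fails for vertices contained in no edge: $\Hyp=\{\{a,b\}\}$ on $V=\{a,b,c\}$ is conformal, yet $\{c\}$ is an uncovered maximal clique. So build $G$ only on $\bigcup\Hyp$, and treat $\Hyp=\emptyset$ separately. The paper sidesteps this through its convention that hypercliques have at least $r$ vertices.
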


\noindent
\textbf{Outline.}
Next, we give a technical overview of our main contributions.
We fix the basic notation in \autoref{sec:prelims}.
\autoref{sec:delay_improvement} then describes the look ahead and its applications,
and \autoref{sec:equivalence} proves the equivalence between the decision and enumeration problems.

\section{Overview and Discussion}
\label{sec:overview}

Our first contribution is the look-ahead algorithm for the enumeration of minimal hitting sets
in hypergraphs of bounded transversal rank.
The same approach can be used to compute the transversal rank faster.
We then sketch the technical ideas behind the equivalence between the recognition of
hypergraphs with large transversal rank, those with large conformal degree,
and the enumeration of hypercliques and of hitting sets.
We also briefly discuss the possibility of further improvements
as well as open questions for future work.

\subsection{Looking Ahead for Higher-Order Extensions}
\label{subsec:overview_look-ahead}

Our enumeration algorithm
improves the one by Bläsius et al.~\cite{Blaesius22EfficientlyJCSS}.
It constructs a search tree pruned by a subroutine that decides,
for a pair of disjoint sets $X,Y \subseteq V$ of vertices,
whether $X$ can be extended to a minimal hitting set without using any elements from $Y$.
This extension subroutine runs in time $O(\Delta^{|X|} \nwspace mn)$ for any $Y$.
Their main argument for the $O(\Delta^{k^*} mn^2)$ delay bound is that,
on hypergraphs with transversal rank $k^*$,
the subroutine is only ever invoked with sets $X$ that satisfy $|X| \le k^*$.
This holds even though $k^*$ is a priori \emph{unknown} to the algorithm.
The article \cite{Blaesius22EfficientlyJCSS} also contains several conditional lower bounds
showing that the extension algorithm is almost optimal.
A method running in time 
$O(m^{|X|-\varepsilon}) \cdot \poly(n)$ for any constants $|X| \ge 2$ and $\varepsilon > 0$
would violate the Strong Exponential Time Hypothesis (SETH).
The, arguably much stronger,
Non-deterministic Strong Exponential Time Hypothesis (NSETH)
is a barrier to reduce the running time even to $O(m^{|X|+1-\varepsilon}) \cdot \poly(n)$
(see \autoref{cor:fine-grained} for a precise statement).
So there is little hope for improvement from that side.

Instead, we tackle the bottleneck of the tree search.
In the root the two sets $X$ and $Y$ are both empty.
The search checks in each node whether $(X,Y)$ is extendable.
If so, the algorithm branches on the decision whether to add the next vertex
in $V{\setminus}(X \cup Y)$ to the partial solution $X$ or the set of ``forbidden'' vertices $Y$.
The worst case occurs once a set $X$ is reached 
that has cardinality exactly equal to the transversal rank $k^*$.
If we were to know $k^*$ in advance,
we could stop the search one level before the bottleneck at $|X| = k^* -1$.
We would quickly find all extending minimal hitting sets
since they have at most one more vertex.
All other supersets of $X$ are irrelevant for the enumeration
as they are already too large to be minimal solutions.
The issue is that we cannot test for the right cardinality of $X$
without computing the transversal rank $k^*$.

To solve the bottleneck,
we instead integrate a ``look ahead'' in \emph{every} level of the search tree.
The main difficulty is to decide whether $X$ still has higher-order extensions,
that is, a minimal hitting set $T \supseteq X$ with $|T{\setminus}X| \ge 2$.
The search must then expand the current node $(X,Y)$ to find the remaining solutions.
We prove a structural characterization of the higher-order extensions,
which implies that deciding the existence of such a $T$
is at least as hard as the extension problem without size restrictions.
Notwithstanding, we give an algorithm running in the same $O(\Delta^{|X|} \nwspace mn)$ time.
In other words, our look ahead is for free.
This is the reason for our delay improvement
and also helps with the original problem 
of recognizing hypergraphs with transversal rank at least $k$.

One could hope that this approach can somehow be iterated.
Instead of considering only two more vertices,
one may want to check for minimal extension that have up to $\ell$ more elements 
and expect an enumeration delay of $\Delta^{k^*-\ell+1} \nwspace m n^{\ell + O(1)}$
and similar improvements for computing the transversal rank.
If feasible, the approach would bypass the lower bounds on the extension problem 
and, just by enhancing the tree search, enable a smooth trade-off 
between the dependency on $\Delta$ and $n$.
However, the iterated look ahead requires a complementing algorithm 
to decide whether $X$ is contained in a minimal hitting set of cardinality at least $|X|+\ell$.
For $\ell > 2$, this seems to be much harder than extension.

\subsection{Transversal Rank and Conformal Degree}
\label{subsec:overview_hypercliques}
	
It is conceivable that an entirely different approach lowers the running time
for recognizing hypergraphs with transversal rank at least $k$
all the way to $\poly(m) \cdot n^{k+O(1)}$.
We show that this is equivalent to several other new results
involving hypercliques instead of hitting sets.
Consider a hypergraph $\Hyp$ with vertex set $V$
and a minimal hitting set $S \subseteq V$ of at least $k$ vertices.
Minimality implies that,
for each subset $S' \subsetneq S$ with $|S'| = k-1$,
there exists an edge $E \in \Hyp$ such that $S' \cap E  = \emptyset$.
Thus, $S'$ is contained in the complement $\overline{E} = V{\setminus}E$.
Since $S$ hits every edge of $\Hyp$,
the set itself is \emph{not} contained in any complement edge.
This shows that the hypergraph $\overline{\Hyp} = \{V{\setminus}E \mid E \in \Hyp\}$
is not $(k{-}1)$-conformal.
In fact, \autoref{lem:BergeDuchet_subhypergraph} can be reformulated to
state that a transversal rank at least $k$ is equivalent
to the conformal degree of $\overline{\Hyp}$ being at least $k$
(see \autoref{lem:BergeDuchet_conformal}).

We extend this equivalence to several enumeration problems
by using ideas from the analysis of uniform hypergraphs.
Given a hypergraph $\Hyp$ with arbitrary edge sizes,
its $k$-\emph{section} $[\Hyp]_k$ is constructed
by including any set of $k$ vertices
that appear together in some edge of $\Hyp$ as an edge of $[\Hyp]_k$.
Clearly, the $k$-section is $k$-uniform.
(In the example above, each set $S'$ is an edge of the $(k{-}1)$-section of $\overline{\Hyp}$.)
We characterize the conformal degree of $\Hyp$
in terms of the maximal hypercliques of $[\Hyp]_k$.
The crucial observation is that we only need to know at most $|\Hyp|$ such hypercliques 
to compute the conformal degree, 
even if there are way more cliques than that.
This is where enumeration with incremental delay comes into play.
These algorithms allow for a non-trivial bound on the maximum time
between consecutive outputs,
and thus on the time until we see the $|\Hyp|$-th solution.

Recall that the cliques of a graph
are in one-to-one correspondence with the independent sets of the complement graph.
Those, in turn, correspond to the vertex covers (hitting sets) by taking the set complement.
The same argument transfers any delay improvement for minimal hitting sets in $k$-uniform hypergraphs
to the enumeration of maximal hypercliques.
Now to close the circle of reductions,
we show that a faster algorithm for computing the transversal rank
can indeed speed up the enumeration of minimal hitting sets in uniform hypergraphs.
Our result even holds for the larger class of hypergraphs with bounded rank,
we do not need all edges to have the same size.
Consider an algorithm that has already found several minimal hitting sets of $\Hyp$,
but has to decide whether there are still more solutions to be discovered.
The minimal hitting sets found so far form a hypergraph $\Gyp$ of their own.
Building on the work by Eiter and Gottlob~\cite{EiterGottlob95RelatedProblems},
we show how to decide whether $\Gyp$ already contains all solutions,
using information about the transversal rank of $\Gyp$.

As usual with fine-grained equivalences, 
they can be seen as additional motivation to find better algorithms
or as conditional evidence of the impossibility of this endeavor.
Currently, there is no algorithm known that computes the transversal rank,
equivalently, the conformal degree,
 in time $\poly(m) \cdot n^{k+O(1)}$.
We leave it as an open problem to obtain such a method or rule out its existence
under plausible complexity assumptions.
Arguably, a more modest research goal is
an algorithm running in time $m^{k+1-\varepsilon} \cdot n^{o(k)}$.

\section{Preliminaries}
\label{sec:prelims}

\textbf{Hypergraphs, rank, and complements.}
A \emph{hypergraph} is a finite set $V$ together with a collection $\Hyp \subseteq 2^V$ of subsets.
The elements of $V$ are the \emph{vertices}, $\Hyp$ contains the (\emph{hyper}-)\emph{edges}.
We usually identify the hypergraph with its edge set.
It is safe to assume that all hypergraphs in this work share the same vertex set.
We use $n = |V|$ to denote its size.
The hypergraphs may have different number of edges,
we use $m = |\Hyp|$ if this does not create ambiguities.
A vertex $v \in V$ has \emph{degree} $\deg(v) = | \{ E \in \Hyp \mid v \in E\}|$,
i.e., the number of edges containing $v$.
The \emph{maximum degree} of $\Hyp$ is $\Delta = \max_{v \in V} \deg(v)$.

The \emph{rank} $\rank(\Hyp) = \max_{E \in \Hyp} |E|$
is the maximum edge cardinality.
If the rank is bounded by some non-negative integer $r$,
we say that $\Hyp$ is an \emph{$r$-hypergraph}.
It is \emph{$r$-uniform} if all its edges have the same size $r$.
The $2$-uniform hypergraphs are the usual (undirected, simple) graphs.
A hypergraph $\Hyp$ is \emph{Sperner} if no two of its edges are contained in each other.
Note that uniform hypergraphs (of any rank) are always Sperner.

We need two different notions of a \emph{complement} of a hypergraph $(V,\Hyp)$.
We define $\overline{\Hyp} = \{ V{\setminus}E \mid E \in \Hyp \}$
to be the hypergraph containing the complement edges.
In the specific case where $\Hyp$ is $r$-uniform,
we let $\Hyp^\textsf{C} = \{ E \in \binom{V}{r} \mid E \notin \Hyp\}$
consist of all the $r$-subsets of vertices that are non-edges of $\Hyp$.
\vspace*{.5em}

\noindent
\textbf{Hitting sets and transversal rank.}
Let $(V,\Hyp)$ be a (not necessarily rank-bounded) hypergraph
and $X \subseteq V$ a set of vertices.
The subhypergraph $\Hyp(X)$ consists of all edges $E \in \Hyp$
that contain an element of $X$.
The remaining edges are in $\Hyp^{V{\setminus}X} = \Hyp{\setminus}\Hyp(X)$.
The notation alludes to the fact that $E \in \Hyp^{V{\setminus}X}$
is equivalent to $E \subseteq V{\setminus}X$.

We say a set $T \subseteq V$ of vertices
is a \emph{hitting set}, or \emph{transversal},
if it intersects every hyperedge,
meaning that $E \cap T \neq \emptyset$ holds for all $E \in \Hyp$
or, equivalently, $\Hyp(T) = \Hyp$.
The hitting set is (\emph{inclusion-wise}) \emph{minimal} 
if none of its proper subsets is also a hitting set.
Equivalently, a hitting set is minimal iff
for each $t \in T$,
there exists a \emph{private} edge $E \in \Hyp$ with $E \cap T = \{t\}$.
If some $t \in T$ does not have a private edge,
then the set $T$ is \emph{redundant}.

The minimal hitting sets of $\Hyp$
form another hypergraph on the same vertex set $V$,
the \emph{transversal hypergraph} $\Tr(\Hyp)$.
It is Sperner, 
regardless of whether $\Hyp$ has this property.
For two disjoint sets $X,Y \subseteq V$,
we use $\Tr(\Hyp)[X,Y]$ for the subhypergraph of all those minimal hitting sets $T \in \Tr(\Hyp)$
that satisfy $X \subseteq T \subseteq V{\setminus}Y$.

The \emph{transversal rank} of $\Hyp$ is $\rank(\Tr(\Hyp))$,
that is, the maximum cardinality of a minimal transversal.
We consider the corresponding search problem.
\vspace*{.25em}

\noindent
\TransRank
\vspace*{.25em}
\begin{description}
	\item [Input:] A hypergraph $(V,\Hyp)$ and a non-negative integer $k$.
	\vspace*{.25em}
	\item [Output:] Either produce $T \in \Tr(\Hyp)$ with $|T| \ge k$
			or correctly output that $\rank(\Tr(\Hyp)) < k$.
\end{description}
\vspace*{-.25em}

\noindent
The problem is trivial for $k \le 2$.
A hypergraph $\Hyp$ has transversal rank at least $1$ if and only if $\emptyset \notin \Hyp$.
It has transversal rank at least $2$ iff it contains two edges
that are incomparable with respect to set-inclusion.
\vspace*{.25em}

\noindent
\textbf{Hypercliques and conformality.}
Let $(V,\Hyp)$ be an $r$-uniform hypergraph.
A \emph{hyperclique} in $\Hyp$ is a set $C \subseteq V$ of at least $r$ vertices
such that every subset $S \subseteq C$ with $|S| = r$
is a hyperedge, $S \in \Hyp$.
A hyperclique is (\emph{inclusion-wise}) \emph{maximal} if all its proper supersets are no longer hypercliques.
The collection of all maximal hypercliques is denoted $\C(\Hyp)$.
Conversely, if $C$ does not contain any edge of $\Hyp$, it is an \emph{independent set}.

Let now $(V,\Hyp)$ be an arbitrary hypergraph and $k$ a positive integer.
$\Hyp$ is $k$-\emph{conformal} if for every set $S \subseteq V$ the following
two conditions are equivalent.
\vspace*{.25em}
\begin{enumerate}
	\item There exists a hyperedge $E \in \Hyp$ such that $S \subseteq E$.
	\vspace*{.25em}
	\item For each subset $S' \subseteq S$ with $|S'| \le k$ there exists an edge $E' \in \Hyp$ 
		with $S' \subseteq E'$.
\end{enumerate}
\vspace*{-.25em}

\noindent
If $\Hyp$ is $k$-conformal, then it is also $k'$-conformal for all $k' \ge k$.
The \emph{conformal degree} of $\Hyp$ is the smallest $k$ such that $\Hyp$ is $k$-conformal.
\vspace*{.25em}

\noindent
\ConfDeg
\vspace*{.25em}
\begin{description}
	\item [Input:] A hypergraph $(V,\Hyp)$ and a non-negative integer $k$.
	\vspace*{.25em}
	\item [Output:] Either correctly output that $\Hyp$ is $k$-conformal
		or produce a set $S \subseteq V$ such that\\[.25em]
		\hspace*{2.33em} every subset in $\binom{S}{\le k}$ is contained
		in some edge of $\Hyp$, but $S \nsubseteq E$ for all $E \in \Hyp$.
\end{description}

\noindent
\textbf{Parameterized Complexity and the (Strong) Exponential Time Hypothesis.}
We assume some familiarity with parameterized complexity
and only review the most important concepts.
For a more detailed exposition, we refer the reader to the textbooks~\cite{Cygan15ParamertizedAlgorithms,DowneyFellows13Parameterized,FlumGrohe06ParameterizedComplexityTheory}.
Consider pairs $(I,k) \in \{0,1\}^* \times \mathbb{N}_0$,
where $I$ is (a suitable encoding of) an input instance 
and the non-negative integer $k$ is the parameter.
A parameterized problem $\Pi$ is a subset of $\{0,1\}^* \times \mathbb{N}_0$.
The problem is \emph{fixed parameter tractable} (\FPT) if there exists a computable function $f$
such that the inclusion $(I,k) \in \Pi$ can be decided in time $f(k) \cdot \poly(|I|)$,
called \FPT-\emph{time}.
A parameterized reduction between parameterized problems $\Pi$ and $\mathrm{P}$
is a function computable in \FPT-time that maps instances $(I,k)$ for $\Pi$
to equivalent instances $(I',k')$ for $\mathrm{P}$
such that there exists a computable function $g$ with $k' \le g(k)$.

These reductions give rise to a system of complexity classes called the \W-\emph{hierarchy}.
The \textsc{Weighted Circuit Satisfiability} problem is,
given a Boolean circuit $C$ and a non-negative integer $k$,
to decide whether $C$ has a satisfying assignment of Hamming weight $k$.
The problem parameter is $k$.
A large gate in $C$ is one with fan-in strictly larger than $2$.
The \emph{weft} of $C$ is the maximum number of large gates on any input-output path through $C$.
For any positive integer $t$, the class $\W[t]$
is the collection of all parameterized problems reducable to
\textsc{Weighted Circuit Satisfiability} when restricted to circuits of constant depth and weft $t$.
All inclusions in the \W-hierarchy 
$\FPT \subseteq \W[1] \subseteq \W[2] \subseteq \W[3] \subseteq \dots$ 
are conjectured to be strict.
The class $\co\W[t]$ contains the complements of problems in $\W[t]$.
Beyond the \W-hierarchy lies the class \para-\NP
which contains all parameterized problems that can be solved by a \emph{nondeterministic} algorithm in \FPT-time.
We have $\FPT = \text{\para-\NP}$ if and only if $\textsf{P} = \NP$,
 see~\cite{FlumGrohe06ParameterizedComplexityTheory}.
A parameterized problem is complete for \para-\NP if it is \NP-complete
already for constant parameter values.
The class \para-\co\NP is the analog using co-nondeterministic algorithms and the class \co\NP.

A stronger hardness assumption than the $\W$-hierarchy being proper
is the \emph{Exponential Time Hypothesis} (ETH)~\cite{Impagliazzo01ETH1,Impagliazzo01ETH2}.
It states that the $3$-\textsc{Sat} problem,
the satisfiability problem for Boolean formulas on $n$ variables
in conjunctive normal form with clause length $3$, cannot be solved in time $2^{o(n)}$.
It is known that ETH implies $\W[1] \neq \FPT$~\cite{Chen06LowerBounds}.
In turn, ETH is implied by the \emph{Strong Exponential Time Hypothesis} (SETH)~\cite{Impagliazzo01ETH2}
positing that, for every constant $\varepsilon > 0$, there exists a clause length $\ell$
such that $\ell$-\textsc{Sat} cannot be solved in time $O(2^{(1-\varepsilon) n})$.
Finally, the \emph{Nondeterministic Strong Exponential Time Hypothesis} (NSETH)
is a common generalization of SETH and $\co\NP \neq \NP$~\cite{Carmosino16NSETH}.
It denotes the conjecture that, for every $\varepsilon > 0$, there exists an $\ell$
such that no \emph{co-nondeterministic algorithm without randomness} can decide $\ell$-\textsc{Sat}
in time $O(2^{(1-\varepsilon) n})$.
Note that, while it formalizes the current frontier in worst-case \textsc{Sat}-solving,
NSETH is a very strong assumption and thus not entirely plausible.
Its value stems from the fact that both proving and refuting NSETH would have
interesting consequences in complexity theory~\cite{Carmosino16NSETH}.

\section{A Look Ahead for the Transversal Rank}
\label{sec:delay_improvement}

In this section, we present an algorithm that,
given a non-empty hypergraph $\Hyp \neq \emptyset$ with (unknown) transversal rank $k^*$
and maximum degree $\Delta$, 
enumerates the minimal hitting sets of $\Hyp$ with delay 
$O(\Delta^{k^*-1} \nwspace mn^2)$
using $O(mn)$ space.
We first review the method by Bläsius et al.~\cite{Blaesius22EfficientlyJCSS},
including the analysis of Araújo et al.~\cite{Araujo23MaximumMinimalBlockingSetAlgorithmica}.

\subsection{The Enumeration Algorithm by Bläsius et al.}
\label{subsec:delay_Blaesius}

The algorithm in \cite{Blaesius22EfficientlyJCSS}
constructs a search tree which is pruned by repeated calls to an \emph{extension oracle},
that is, a subroutine that decides for set of vertices $X$
whether it is contained in a minimal hitting set $T \subseteq X$.
We call such a set $T$ a \emph{minimal extension} of $X$.
The correctness of the algorithm hinges on the following lemma.

\begin{lemma}[\cite{Blaesius22EfficientlyJCSS,Boros98Subimplicants}]
\label{lem:char_extendable}
	Let $(V,\Hyp)$ be a hypergraph and $X \subseteq V$ a set of vertices.
	The set $X$ has a minimal extension if and only if
	there exists a family of edges $\{E_x\}_{x \in X} \subseteq \Hyp$ such that
	\vspace*{.25em}
	\begin{enumerate}
		\item for every vertex $x \in X$, we have $E_x \cap X = \{x\}$;
		\vspace*{.25em}
		\item for every edge $E \in \Hyp$ contained in $\bigcup_{x \in X} E_x$,
			we have $E \cap X \neq \emptyset$.
	\end{enumerate}
\end{lemma}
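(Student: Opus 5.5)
We prove both directions of \autoref{lem:char_extendable} directly. The family $\{E_x\}_{x \in X}$ will serve as the private edges of the elements of $X$ in a minimal extension, and condition~2 is exactly what allows completing $X$ without destroying these private edges. (The statement has a typo: it should read $T \supseteq X$.)

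\emph{Necessity.} Let $T \supseteq X$ be a minimal hitting set. For each $x \in X$, minimality gives a private edge $E_x \in \Hyp$ with $E_x \cap T = \{x\}$. Since $X \subseteq T$, this yields $E_x \cap X = \{x\}$, which is condition~1. For condition~2, consider an edge $E \subseteq \bigcup_{x} E_x$. Each $E_x$ meets $T$ only in $x$, so $E \cap T \subseteq X$. Because $T$ is a hitting set, $E \cap T \neq \emptyset$, and therefore $E \cap X \neq \emptyset$.

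\emph{Sufficiency.} Let $U = \bigcup_{x \in X} E_x$ and $Z = V \setminus U$. We start from the set $X \cup Z$, which is a hitting set. Indeed, every edge $E$ either has a vertex outside $U$, hence in $Z$, or satisfies $E \subseteq U$, and then condition~2 gives $E \cap X \neq \emptyset$.

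Next we greedily delete redundant vertices of $Z$ while the set remains a hitting set, never removing vertices of $X$, until every $z$ in the remaining part $Z' \subseteq Z$ has a private edge. Call the result $T = X \cup Z'$.

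It remains to show that every $x \in X$ also has a private edge in $T$; we claim $E_x$ is one. By condition~1, $E_x \cap X = \{x\}$. Moreover, $E_x \subseteq U$ is disjoint from $Z \supseteq Z'$. Hence $E_x \cap T = \{x\}$.

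So every element of $T$ has a private edge, which makes $T$ a minimal hitting set containing $X$.

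The only delicate point is the order of deletions in the sufficiency part. We must delete only from $Z$, and must check that the edges $E_x$ stay private no matter which vertices of $Z$ are removed. This holds because each $E_x$ avoids $Z$ entirely, so its intersection with $T$ is unaffected by the deletions.
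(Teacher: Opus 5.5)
Your proof is correct in both directions, including the degenerate case $X=\emptyset$, where condition~2 just says $\emptyset\notin\Hyp$. You are also right that the sentence defining a minimal extension should read $T \supseteq X$.

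The paper does not prove this lemma itself; it only cites it. Your sufficiency argument is nonetheless the same construction the paper uses in the proof of \autoref{thm:transversal_rank_algorithm}. There it observes that \emph{any} minimal hitting set inside $X \cup \bigl(V \setminus \bigcup_{x \in X} E_x\bigr)$ must contain $X$, because each $E_x$ meets this set only in $x$. So your ``delicate'' restriction to deleting only vertices of $Z$ is harmless but unnecessary.
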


\noindent
Recall that every vertex $x$ of a minimal hitting set needs a private edge only hit by $x$. 
The first condition of \autoref{lem:char_extendable} makes $E_x$ a \emph{candidate private edge}
for $x$ w.r.t.\ the partial solution $X$.
When $X$ is clear from context, we use $\Hyp_x = \{E \in \Hyp \mid E \cap X = \{x\}\}$
for the subhypergraph of all candidate private edges for $x$.
The second condition then ensures that the candidates $E_{x_1}, E_{x_2}, \dots$
for different elements of $X$ do not interfere with each other.

Besides the partial solution $X$, we also need to incorporate a set $Y$ of vertices
that have already been excluded from the search. 
With the notation $\Tr(\Hyp)[X,Y]$ for all minimal hitting sets of $\Hyp$
that contain $X$ but avoid $Y$,
we get the following.

\begin{corollary}
\label{cor:char_extendable}
	Let $(V,\Hyp)$ be a hypergraph and $X,Y \subseteq V$
	two disjoint set of vertices.
	There is a minimal hitting set $T \in \Tr(\Hyp)[X,Y]$ if and only if
	there exists a family of edges $\{E_x\}_{x \in X} \subseteq \Hyp$ such that
	\vspace*{.25em}
	\begin{enumerate}
		\item for every vertex $x \in X$, we have $(E_x{\setminus}Y) \cap X = \{x\}$;
		\vspace*{.25em}
		\item for every edge $E\in \Hyp$, if $(E{\setminus}Y) \cap X = \emptyset$,
		then $(E{\setminus}Y) \nsubseteq \bigcup_{x \in X} (E_x{\setminus}Y)$.
	\end{enumerate}
\end{corollary}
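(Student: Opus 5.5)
We reduce \autoref{cor:char_extendable} to \autoref{lem:char_extendable} applied to the restricted hypergraph $\Hyp' = \{E{\setminus}Y \mid E \in \Hyp\}$ on vertex set $V{\setminus}Y$. The key claim is that $\Tr(\Hyp)[X,Y] = \Tr(\Hyp')[X,\emptyset]$, i.e., the minimal hitting sets of $\Hyp$ avoiding $Y$ are exactly the minimal hitting sets of $\Hyp'$ (all of which avoid $Y$ automatically). Given this, a minimal extension of $X$ in $\Hyp'$ exists iff some $T \in \Tr(\Hyp)[X,Y]$ exists. Note that $X \cap Y = \emptyset$, so $X \subseteq V{\setminus}Y$.

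To prove the claim, take $T \subseteq V{\setminus}Y$. For every $E \in \Hyp$ we have $T \cap E = T \cap (E{\setminus}Y)$. Hence $T$ hits $\Hyp$ iff it hits $\Hyp'$. Moreover, an edge $E$ is private for $t$ w.r.t.\ $T$ in $\Hyp$ iff $E{\setminus}Y$ is private for $t$ in $\Hyp'$. So minimality transfers in both directions. Conversely, every minimal hitting set of $\Hyp'$ lies in $V{\setminus}Y$, since vertices of $Y$ appear in no edge of $\Hyp'$ and would have no private edge. If $\emptyset \in \Hyp'$, i.e., some $E \subseteq Y$, both sides are empty, and the conditions of the corollary also fail: condition 2 demands $\emptyset \nsubseteq \cdots$. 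This degenerate case is worth a separate sentence.

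Next we translate the conditions of \autoref{lem:char_extendable} for $\Hyp'$. A family $\{E'_x\}$ with $E'_x \in \Hyp'$ corresponds to a family $\{E_x\} \subseteq \Hyp$ with $E'_x = E_x{\setminus}Y$. Condition 1 becomes $(E_x{\setminus}Y)\cap X = \{x\}$. Condition 2 reads: every $E' \in \Hyp'$ with $E' \subseteq \bigcup_x (E_x{\setminus}Y)$ meets $X$. Its contrapositive, ranging over $E \in \Hyp$, is exactly condition 2 of the corollary.

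The only subtle step is the multiset issue: distinct edges of $\Hyp$ may collapse to the same set in $\Hyp'$. This is harmless because \autoref{lem:char_extendable} and hitting sets only depend on the set system, so I expect no real obstacle beyond stating the bijection of private edges carefully.
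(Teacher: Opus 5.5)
Your proposal is correct and is essentially the paper's argument. The paper treats the corollary as an immediate consequence of \autoref{lem:char_extendable} applied to the reduced hypergraph $\Hyp^* = \{E{\setminus}Y \mid E \in \Hyp\}$, using your observation that the minimal extensions of $X$ avoiding $Y$ are exactly the minimal extensions of $X$ in $\Hyp^*$ (your separate treatment of $E \subseteq Y$ is harmless but unnecessary, since $\emptyset \in \Hyp^*$ already makes condition~2 of the lemma fail).
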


For a subhypergraph $\Gyp \subseteq \Hyp$,
we use $\Gyp^* = \{E{\setminus}Y \mid E \in \Gyp\}$
for the respective collection of edges that have $Y$ removed.
For example, $\Hyp^*_x = \{E{\setminus}Y \mid E \in \Hyp, E \cap X = \{x\}\}$
are the reduced candidate private edges for $x$,
and $(\Hyp^{V{\setminus}X})^* = \{E{\setminus}Y \mid E \in \Hyp, E \cap X = \emptyset \}$
the reduced edges not yet hit by $X$.
We sometimes also use $E^*$ for the edge $E{\setminus}Y$ itself.

\autoref{cor:char_extendable} implies the following solution to the extension problem for a given pair 
of disjoint sets $X$ and $Y$.
If one of the hypergraphs $\Hyp^*_x$ is empty, $X$ is redundant and thus has no minimal extension.
Otherwise, if $(\Hyp^{V{\setminus}X})^* = \emptyset$ (that is, all edges of $\Hyp^*$ are hit by $X$),
then $X $ is a minimal hitting set for the \emph{original} hypergraph $\Hyp$
of the non-reduced edges.
If those sanity checks do not already resolve the instance,
all combinations of edges
$(E^*_{x_1},E^*_{x_2},\dots, E^*_{x_{|X|}}) \in \prod_{x \in X} \Hyp^*_{x}$
in the Cartesian product are tested whether their union
$\bigcup_{x \in X} E^*_x$ completely contains some edge of $(\Hyp^{V{\setminus}X})^*$.
There is a combination for which this is \emph{not} the case if and only if 
$X$ is extendable to a minimal hitting set for $\Hyp$ without using $Y\!$.

Fix an arbitrary total order $\le$ on the vertex set $V$.
The algorithm constructs a tree in which
each node is labeled with a disjoint pair $(X,Y)$ 
such that $X \cup Y$ is a prefix of $V$ w.r.t.\ the order $\le$. 
The root is labeled $(\emptyset,\emptyset)$.
In any the node $(X,Y)$, it is tested whether $X$ is extendable without $Y$.
If $X$ has no minimal extension, the search in that branch halts and backtracks.
If the check determines that $X$ itself is already a minimal hitting set,
the solution is output and the branch pruned as well.
Otherwise, $X$ is indeed extendable without $Y$ and
the search branches on the decision where to include the next vertex 
$v_{\min} = \min_{\le} {V{\setminus}(X \cup Y)}$
by creating two child nodes $(X \cup \{v_{\min}\},Y)$ and $(X, Y\cup\{v_{\min}\})$.

The algorithm does not need to know the transversal rank.
By design, the cardinality of $X$ never exceeds $k^*$.
Any set with $|X|  = k^*$ is either a (maximum) minimal hitting set itself 
or already too large to be extendable.
The worst case occurs if a set with $k^*{-}1$ vertices is verified to be extendable,
but the next vertex $v_{\min}$ is such that 
$X \cup \{v_{\min}\}$ is neither redundant nor minimal.
Then, all $\prod_{x \in X \cup \{v_{\min}\}} |\Hyp^*_{x}| = O(\Delta^{k^*})$ possible combinations
have to be checked only to verify that there is always some edge
 $E^* \in (\Hyp^{V{\setminus}(X \cup \{v_{\min}\})})^*$
with $E^* \subseteq \bigcup_{x \in X \cup \{v_{\min}\}} E^*_x$.
Each check takes $O(|(\Hyp^{V{\setminus}(X \cup \{v_{\min}\})})^*| \cdot |V|) = O(mn)$ time.
The $O(\Delta^{k^*} mn^2)$ delay follows from $O(n)$ nodes sitting between two outputs.

\subsection{Higher-Order Extensions}
\label{subsec:delay_higher-order}

Our improvement stems from avoiding the bottleneck
by looking ahead and checking whether $X$ has a minimal extension with at least 2 more vertices.
Consider disjoint sets $X$ and $Y$ of which we already know
that there exists a minimal hitting set $T \in \Tr(\Hyp)[X,Y]$.
It trivial to check whether $T = X$ itself is already minimal,
we call this a \emph{0-extension}. 
It is also easy to find all \emph{1-extensions},
the minimal solutions that have exactly one more element than $X$.
However, the minimal extension $T$ may be (much) larger than $|X|+1$.
Those are the \emph{higher-order extensions}.
As it will turn out, deciding their existence is the hard core of the extension problem,
see \autoref{subsec:delay_complexity}.
The following lemma is crucial both for understanding the computational complexity
of higher-order extensions as well as for implementing the look-ahead idea 
without sacrificing running time.
We first discuss the case where $Y = \emptyset$.

\begin{lemma}
\label{lem:higher-order}
	Let $(V,\Hyp)$ be a hypergraph and $X \subseteq V$ a set of vertices.
	Define $S = \bigcap \, \Hyp^{V{\setminus}X}$.
	A minimal hitting set $T \in \Tr(\Hyp)$ with $T \supseteq X$ 
	satisfies $|T{\setminus}X| > 1$ if and only if $ T \subseteq V{\setminus}S$.
\end{lemma}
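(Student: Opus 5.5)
The plan is to prove both directions by contrapositive, using only the private-edge characterization of minimality and one observation about $S$. Throughout I use the standard convention that the intersection of the empty family is $V$. So if $X$ already hits every edge, then $S = V$ and $V{\setminus}S = \emptyset$. I also assume $\Hyp \neq \emptyset$, as in this section, so every hitting set is non-empty. The observation is this: if $\Hyp^{V{\setminus}X} \neq \emptyset$, then every edge in it is disjoint from $X$, so $S \subseteq V{\setminus}X$. Hence any element of $T \cap S$ lies in $T{\setminus}X$.

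\emph{($\Rightarrow$)} Suppose $|T{\setminus}X| \ge 2$ and, for contradiction, that some $t \in T \cap S$ exists. Then $\Hyp^{V{\setminus}X} \neq \emptyset$, since otherwise $T = X$: every element of $T{\setminus}X$ would lack a private edge, and $T{\setminus}X$ would be empty by minimality. By the observation, $t \in T{\setminus}X$. Because $t$ lies in every edge not hit by $X$, the set $X \cup \{t\}$ is a hitting set of $\Hyp$. It is a proper subset of $T$ because $|T{\setminus}X| \ge 2$, which contradicts the minimality of $T$. Therefore $T \subseteq V{\setminus}S$.

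\emph{($\Leftarrow$)} Suppose $|T{\setminus}X| \le 1$; I show $T \nsubseteq V{\setminus}S$, splitting into two cases.

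If $T = X$, then $X$ is a hitting set, so $\Hyp^{V{\setminus}X} = \emptyset$. Then $S = V$ and $V{\setminus}S = \emptyset$. Since $T$ is a non-empty hitting set, $T \nsubseteq \emptyset$.

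If $T = X \cup \{t\}$ with $t \notin X$, every edge $E \in \Hyp^{V{\setminus}X}$ must be hit by $T$ but not by $X$, so $t \in E$. Hence $t \in S$ (also when $\Hyp^{V{\setminus}X}$ is empty, since then $S = V$). So $T \cap S \neq \emptyset$.

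The argument has no real obstacle. The only step needing care is the degenerate case where $X$ already hits everything, which is handled by the empty-intersection convention together with $T \neq \emptyset$.
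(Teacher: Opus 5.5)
Your proof is correct and uses the same argument as the paper. If some $s \in S$ lies in $T$, then $X \cup \{s\} \subseteq T$ is already a hitting set, so minimality caps $|T \setminus X|$ at one; conversely, a single added vertex must lie in every edge of $\Hyp^{V\setminus X}$ and hence in $S$. Your explicit treatment of the case $\Hyp^{V\setminus X} = \emptyset$, via $\bigcap \emptyset = V$ and $\Hyp \neq \emptyset$ so that $T = X$ is excluded, is left implicit in the paper's proof and is genuinely needed for the ``if'' direction.
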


\begin{proof}
	The set $S = \bigcap \, \Hyp^{V{\setminus}X}$ contains those vertices
	that appear in all edges that are not yet hit by $X$.
	For any such $s \in S$, the set $X \cup \{s\}$ is a (not necessarily minimal) hitting set of $\Hyp$.
	Thus, no proper superset $T \supsetneq X \cup \{s\}$ is a \emph{minimal} hitting set.
	If $X$ has a higher-order extension $T$ at all, we thus have $s \notin T$.
	
	Conversely, every $T \in \Tr(\Hyp)$ with $X \subseteq T \subseteq V{\setminus}S$
	has at least two more elements than $X$.
	It cannot use a single vertex to hit all edges in $\Hyp^{V{\setminus}X}$
	since those are all in $S$.
\end{proof}

Regarding the general case that includes non-empty sets $Y$,
it is easy to verify that $T$ is a minimal extension of $X$ avoiding $Y$
if and only if it is an (arbitrarily sized) minimal extension in the hypergraph of reduced edges
$\Hyp^* = \{E{\setminus}Y \mid E \in \Hyp\}$.

\begin{corollary}
\label{cor:higher-order}
	Let $(V,\Hyp)$ be a hypergraph and $X,Y \subseteq V$
	two disjoint sets.
	Let $S = \bigcap \, (\Hyp^{V{\setminus}X})^*$ be the 
	intersection of the \emph{reduced} edges that are disjoint from $X$.
	A minimal hitting set $T \in \Tr(\Hyp)$ with $X \subseteq T \subseteq V{\setminus}Y$
	satisfies $|T{\setminus}X| > 1$
	if and only if $T \subseteq V{\setminus}(Y\,{\cup}\,S)$.
\end{corollary}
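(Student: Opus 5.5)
The plan is to reduce \autoref{cor:higher-order} to \autoref{lem:higher-order}, applied to the reduced hypergraph $\Hyp^* = \{E{\setminus}Y \mid E \in \Hyp\}$ on the same vertex set $V$.

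The first step is a transfer claim. For any $T$ with $T \cap Y = \emptyset$, we have $T \in \Tr(\Hyp)$ if and only if $T \in \Tr(\Hyp^*)$. Hitting transfers because $T \cap E = T \cap (E{\setminus}Y)$ for every edge $E$. Minimality transfers because private edges correspond: $E \cap T = \{t\}$ if and only if $(E{\setminus}Y) \cap T = \{t\}$. One detail needs a remark: several edges of $\Hyp$ may collapse to the same reduced edge. This is harmless, since $\Hyp^*$ is a set and hitting and minimality depend only on which sets occur.

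The second step identifies the set $S$. Since $X \cap Y = \emptyset$, we have $E \cap X = \emptyset$ if and only if $(E{\setminus}Y) \cap X = \emptyset$. Hence $(\Hyp^*)^{V{\setminus}X} = (\Hyp^{V{\setminus}X})^*$, and the set $S$ of the corollary is exactly the intersection that \autoref{lem:higher-order} forms for $\Hyp^*$ and $X$.

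Now fix $T \in \Tr(\Hyp)$ with $X \subseteq T \subseteq V{\setminus}Y$. By the first step, $T \in \Tr(\Hyp^*)$ and $T \supseteq X$. \autoref{lem:higher-order} then gives $|T{\setminus}X|>1$ if and only if $T \subseteq V{\setminus}S$. Since $T \cap Y = \emptyset$ holds by assumption, this is equivalent to $T \subseteq V{\setminus}(Y \cup S)$.

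The main obstacle is a degenerate case: the reduced family $(\Hyp^{V{\setminus}X})^*$ may be empty, or may contain the empty set. Emptiness happens exactly when $X$ hits everything. Then $S = V$ by the convention for an empty intersection, and $T = X$, so both sides of the equivalence are false, provided $X \neq \emptyset$ or $V$ is nonempty. If $\emptyset \in (\Hyp^{V{\setminus}X})^*$, then no hitting set avoids $Y$, so there is no such $T$ and the statement is vacuous. As a fallback, I would check that the lemma's proof works verbatim for $\Hyp^*$: any $s \in S$ together with $X$ hits every reduced edge, which blocks larger minimal supersets, and if $T$ avoids $S$ then no single vertex of $T{\setminus}X$ can cover all the unhit reduced edges.
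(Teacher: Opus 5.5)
Your proposal is correct and follows the paper's route. The paper only remarks that minimal extensions of $X$ avoiding $Y$ are exactly the minimal extensions of $X$ in $\Hyp^* = \{E{\setminus}Y \mid E \in \Hyp\}$ and then invokes \autoref{lem:higher-order}; your transfer claim and the identity $(\Hyp^*)^{V{\setminus}X} = (\Hyp^{V{\setminus}X})^*$ are the details behind that remark.

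There is one small slip in your degenerate case. If $(\Hyp^{V{\setminus}X})^* = \emptyset$ and $X = \emptyset$ (i.e.\ $\Hyp = \emptyset$), then $T = \emptyset$ satisfies $T \subseteq V{\setminus}(Y \cup S)$ trivially while $|T{\setminus}X| = 0$. So the statement genuinely fails there, whether or not $V$ is empty. The correct proviso is $X \neq \emptyset$, which the section's standing assumption $\Hyp \neq \emptyset$ guarantees.
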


We now implement the look ahead.
Instead of merely checking whether $X$ is extendable without $Y$,
we devise an algorithm that outputs all 0- and 1-extensions of $X$ avoiding $Y$,
and then decides whether there are any higher-order extension left in $\Tr(\Hyp)[X,Y]$.
Moreover, if there are such large extensions,
the algorithm computes a set $Y^{+} \supseteq Y$ with the guarantee that all remaining extensions
also avoid $Y^{+}$.
This obviates the need to evaluate nodes $(X,Z)$ with $Y \subseteq Z \subseteq Y^+$
and thus allows us to short-cut the search.
We further include several sanity checks that allow for premature pruning on suitable inputs.
A full overview of the procedure is given in \autoref{alg:ExtHS}.

\begin{algorithm2e}[p]
\setstretch{1.11}
\vspace*{.5em}
	\Input {\hspace{.53em} Disjoint sets $X = \{x_1, \dots, x_{|X|}\}$ and $Y \subseteq V$.}
	\Output{All minimal extensions $T \in \Tr(\Hyp)[X,Y]$ with $|T{\setminus}X| \le 1$.}
	\Ret{\hspace{.07em}\textsc{continue with} $Y^{+}$ if there exists a $T'\in \Tr(\Hyp) [X,Y]$ 
		with $|T'{\setminus}X| > 1$,\\ 
		\hspace*{4.12em} where $Y \subseteq Y^{+} \subseteq V{\setminus}X$
		is such that all such $T'$ are in $\Tr(\Hyp)[X,Y^{+}]$.\\
		\hspace*{4.1em} \halt if there is no such $T'$.}
	\vspace*{.5em}
	\nonl\Proc{\Ext{$X, Y$}}{:\\
	
	\vspace*{.5em}
	\tcc{preprocessing empty sets $X$}
	\If {$X = \emptyset$}
	{\label{line:check_empty}
		initialize $\Hyp^* = \emptyset$\;
		\ForEach {$E \in \Hyp$}{%
			\lIf{$E \subseteq Y$}{\Return \halt}\label{line:unhit}
			\lIf{$E \cap X = \emptyset$}{add $E{\setminus}Y$ to $\Hyp^*$}
		}
		$S \gets \bigcap \Hyp^*$\;
		\lForEach{$s \in S$}{\out $\{s\}$}
		\lIf{$|\Hyp^*|  = 1$}{\Return \halt}
		\Return \textsc{continue with} $Y \cup S$\;
	} \label{line:end_empty}
	
	\vspace*{.5em}
	\tcc{preprocessing non-empty sets $X$}
	initialize hypergraph $(\Hyp^{V{\setminus}X})^* = \emptyset$\; \label{line:non-empty}
	\lForEach {$x \in X$} {initialize hypergraph $\Hyp^*_x = \emptyset$} 
	\ForEach {$E \in \Hyp$}
	{
		\lIf{$E \subseteq Y$}{\Return \halt} \label{line:not_contained}
		\lIf{$E \cap X = \emptyset$}{add $E{\setminus}Y$ to $(\Hyp^{V{\setminus}X})^*$}
		\lIf {$E \cap X = \{x \}$} {add $E{\setminus}Y$ to $\Hyp^*_x$}	\label{line:check_for_witnesses}
	}
	\lIf {$\exists \nwspace x \in X \colon \Hyp^*_x = \emptyset$}{\Return \halt}	\label{line:everyone_has_a_witness}
	\If {$(\Hyp^{V{\setminus}X})^* = \emptyset$ \label{line:previous_check}}{%
		\out $X$\; 
		\Return \halt;
	}	\label{line:end_of_preprocessing}
	
	\vspace*{.5em}
	\tcc{look ahead}
	$S \gets \bigcap \nwspace (\Hyp^{V{\setminus}X})^*$\;			\label{line:start_of_lookahead}
	$U \gets \bigcup_{x \in X} \nwspace \bigcap \Hyp_x^*$\; \label{line:retain_candicacy}
	\lForEach{$s \in S{\setminus}U$}{%
		\out $ X \cup \{s\}$						\label{line:end_of_lookahead}
	}	
	
	\vspace*{.5em}
	\tcc{higher-order extensions}
	\lIf{$|(\Hyp^{V{\setminus}X})^*| = 1$}{\Return \halt}  \label{line:higher_order_sanity}
	\ForEach {$(E^*_{x_1}, \ldots, E^*_{x_{|X|}}) \in \Hyp^*_{x_1} \times \dots \times \Hyp^*_{x_{|X|}}$	\label{line:brute_force}}
	{
		$W \gets S \cup \bigcup_{i = 1}^{|X|} E^*_{x_i}$\;
		\lIf{$\forall \nwspace E^* \in (\Hyp^{V{\setminus}X})^* \colon E^* \nsubseteq W$}{%
			\Return \textsc{continue with} $Y \cup S \cup U$}	\label{line:combine_witnesses}
	} 	
	\Return \halt\;	\label{line:end}
	}
	\vspace*{.5em}
\caption{Computing the 0- and 1-extensions of $X$ avoiding $Y$ 
	in a hypergraph $(V,\Hyp)$.}
\label[alg]{alg:ExtHS}
\end{algorithm2e}

Until \autoref{line:end_empty}, the procedure handles the case $X = \emptyset$.
This part consists of simplifications of all the major steps that are also used for non-empty $X$.
We defer its description until we have discussed the general case in more detail.
Assume $X \neq \emptyset$ for now.
In lines~\ref{line:non-empty} to \ref{line:check_for_witnesses},
data structures are preprocessed that hold
the subhypergraphs $\Hyp^*_x$ for all $x \in X$ as well as $(\Hyp^{V{\setminus}X})^*$.
The the former hold the candidate private edges while 
the latter contains those $E{\setminus}Y$ for which $E \in \Hyp$ is disjoint from $X$.
Edges $E$ with $|E \cap X| \ge 2$ can be discarded.
If the check for $E \subseteq Y$ in \autoref{line:not_contained} is triggered,
then $E{\setminus}Y = \emptyset$ would be contained in $(\Hyp^{V{\setminus}X})^*$.
This certifies that $X$ cannot be extended without $Y$.
(See the second item of \autoref{cor:char_extendable}.)
The execution can thus be aborted outright. 
After the data structures are filled,
the procedure verifies in \autoref{line:everyone_has_a_witness} 
that every $x$ indeed has a candidate private edge,
otherwise $X$ is redundant.
If $(\Hyp^{V{\setminus}X})^* = \emptyset$, there are no unhit edges.
The previous checks for private edges ensure that 
$X$ must then be a \emph{minimal} hitting set.
Consequently, $X$ is output as its own 0-extension.

The following part from \autoref{line:start_of_lookahead} 
to \ref{line:end_of_lookahead} is the look ahead.
It decides whether adding a single vertex to $X$ is enough to make it a minimal solution.
By \autoref{lem:higher-order}, if there exists such a vertex $s$,
it must be contained in $S = \bigcap \,(\Hyp^{V{\setminus}X})^*$.
However, this condition is not sufficient,
the hitting set $X \cup \{s\}$ may be non-minimal.
The set is indeed minimal iff every $x \in X$ has a (candidate) private edge
w.r.t.\ $X$
that does not contain $s$.
The same edges are then is also private w.r.t.\ $X \cup \{s\}$.
This is tested in lines~\ref{line:retain_candicacy} to \ref{line:end_of_lookahead}.
Some vertex $x \in X$ having a private edge that does not contain $s$
is equivalent to $s \notin \bigcap \Hyp_x^*$.
Therefore, $s \notin U = \bigcup_{x \in X} \nwspace \bigcap \Hyp_x^*$
verifies this for all $x$ simultaneously.
In principle, the new element $s$ also needs a private edge.
However, \emph{any} edge in $(\Hyp^{V{\setminus}X})^*$ can serve this purpose since they all contain $s$
and are disjoint from $X$.
It has already been checked in \autoref{line:previous_check}
that $(\Hyp^{V{\setminus}X})^*$ is non-empty.
In summary, $X\cup \{s\}$ is certified as a 1-extension for every $s \in S{\setminus}U$.

The remaining lines decide whether $X$ has any higher-order extensions,
starting with a sanity check in \autoref{line:higher_order_sanity}.
Consider the case in which there is only a single edge left that is not yet hit by $X$,
that is, $|(\Hyp^{V{\setminus}X})^*| = 1$.
We thus have $(\Hyp^{V{\setminus}X})^* = \{S\}$ and $\Tr(\Hyp)[X, Y \,{\cup}\,S]$ is empty. 
In other words, $X$ has no higher-order extensions.
For the remainder, suppose that there are more than one edge in $(\Hyp^{V{\setminus}X})^*$
that we need to hit.
By \autoref{cor:higher-order}, $X$ has a higher-order extension avoiding $Y$ if and only if
$X$ can be extended without using any vertex of $Y \cup S$.
\autoref{cor:char_extendable} in turn states that this is equivalent
to the existence of a combination of edges 
$E_{x_1} \in \Hyp_{x_1}, \dots, E_{x_{|X|}} \in \Hyp_{x_{|X|}}$ such that
\begin{equation}
\label{eq:avoiding_S}
	E{\setminus}(Y \cup S) \nsubseteq \bigcup_{i = 1}^{|X|} E_{x_i}{\setminus}(Y \cup S)
\end{equation}
holds for every $E \in \Hyp^{V{\setminus}X}$.
We avoid computing the sets $E{\setminus}(Y \,{\cup}\, S)$ or $E_{x_i}{\setminus}(Y \,{\cup}\, S)$
explicitly and instead reuse
$E^* = E{\setminus}Y \in (\Hyp^{V{\setminus}X})^*$ and 
$E_{x_i}^* = E_{x_i}{\setminus}Y \in \Hyp_{x_i}^*$
that were already prepared during the preprocessing.
Observe that (\ref{eq:avoiding_S}) is equivalent to 
$E{\setminus}Y \nsubseteq S \cup  \bigcup_{i = 1}^{|X|} E_{x_i}{\setminus}(Y \cup S)$
and therefore to
$E^* \nsubseteq S \cup  \bigcup_{i = 1}^{|X|} E_{x_i}^*$.
If $X$ indeed has higher-order extensions avoiding $Y$,
then the previous computations show that they do not use vertices from $S$
(otherwise they would be a 1-extensions)
and also not from $U$ (any set $X \cup \{u\}$ with $u \in U$ is redundant).

We now return to the case $X = \emptyset$ (lines~\ref{line:check_empty} to \ref{line:end_empty}).
Again, a data structure for $(\Hyp^{V{\setminus}X})^*$ is constructed and
the algorithm computes the set $S$.
The hypergraphs $\Hyp^*_x$ are obviously not needed.
In this special case, the containment $s \in S$ is \emph{both} necessary
and sufficient for $X \cup \{s\} = \{s\}$ to be a 1-extension.
The unique proper subset $\emptyset \subsetneq \{s\}$ is not a hitting set
by our assumption that $\Hyp \neq \emptyset$ is non-empty.

The brute-force loop (lines~\ref{line:brute_force} to \ref{line:combine_witnesses})
dominates the running time.
Our procedure thus runs in the same asymptotic time $O(\Delta^{|X|} \nwspace mn)$
as the one by Bläsius et al.~\cite{Blaesius22EfficientlyJCSS}
(see also the analysis of Araújo et al.~\cite{Araujo23MaximumMinimalBlockingSetAlgorithmica}).
Since all the data structures $\Hyp^*_{x}$ and $(\Hyp^{V{\setminus}X})^*$
hold disjoint subhypergraphs of $\Hyp^*$,
the total space requirement is $O(mn)$.
Computing all 1-extensions of $X$ as well as the continuation set $Y^+$ 
does not incur a penalty in the running time.

\begin{algorithm2e}[t]
\setstretch{1.1}
\vspace*{.25em}
	\Input{\hspace{.85em}Disjoint sets $X,Y \subseteq V$.}
	\Output{The minimal hitting sets $T \in \Tr(\Hyp)[X,Y]$.}
	\vspace*{.5em}
	\Proc{\Enum{$X, Y$}}{:\\
		\emph{isExtendable} $\gets\ $\Ext{$X, Y$}\; \label{line:call_extendable}
		\lIf{ isExtendable $==$ \emph{\textsc{halt}}}{\Return}
		\If{ isExtendable $==$ \emph{\textsc{continue with}} $Y^{+}$}{
			$v_{\min} \gets \min_{\le} V{\setminus}(X \cup Y^{+})$\;
			\Enum{$X \cup \{v_{\min}\}, Y^{+}$}\; \label{line:recursive_call_1}
			\Enum{$X, Y^{+} \cup \{v_{\min}\}$}\; \label{line:recursive_call_2}
		}
	}
\caption{Recursive algorithm for the transversal hypergraph problem of a hypergraph $\Hyp$ 
	on an ordered vertex set $(V,\le)$.
	The initial call is \mbox{\texttt{enumerate(}$\emptyset$, $\emptyset$\texttt{)}}.}
	\label[alg]{alg:enumerate_recursively}
\end{algorithm2e}

\subsection{Delay Bound}
\label{subsec:delay_proof}

\autoref{alg:enumerate_recursively} combines the \Ext procedure from 
\autoref{alg:ExtHS} with the tree search.
We now show its correctness and the delay improvement.
The technical parts of the proof are encapsulated in the following lemma.

\pagebreak

\begin{lemma}
\label{lem:technical_delay}
	Let $X,Y \subseteq V$ be two disjoint sets of vertices.
	\vspace*{.25em}
	\begin{enumerate}
		\item A call to \Enum{$X, Y$} enumerates exactly the minimal hitting sets in $\Tr(\Hyp)[X,Y]$.
		\vspace*{.25em}
		\item During the execution of \autoref{alg:enumerate_recursively},
			every call to \Enum{$X, Y$} satisfies $|X| \le k^* - 1$.
	\end{enumerate}
\end{lemma}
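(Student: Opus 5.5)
For the first item I will use induction on $|V{\setminus}(X\cup Y)|$, i.e., on the number of undecided vertices. The main tool is a specification lemma for \Ext{$X,Y$}, which I establish first from \autoref{cor:char_extendable} and \autoref{cor:higher-order}:
\begin{enumerate}
	\item it outputs exactly the members of $\Tr(\Hyp)[X,Y]$ with $|T{\setminus}X|\le 1$, each once;
	\item it returns \halt iff no $T\in\Tr(\Hyp)[X,Y]$ has $|T{\setminus}X|\ge 2$;
	\item otherwise it returns $Y^+\supseteq Y$, disjoint from $X$, such that every such $T$ lies in $\Tr(\Hyp)[X,Y^+]$, and no $T\in\Tr(\Hyp)[X,Y^+]$ has $|T{\setminus}X|\le 1$.
\end{enumerate}

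Most of this is the line-by-line discussion above. The halts in \autoref{line:not_contained}, \autoref{line:everyone_has_a_witness} and \autoref{line:previous_check} are justified by \autoref{cor:char_extendable}. The 1-extensions are exactly $X\cup\{s\}$ with $s\in S{\setminus}U$, by \autoref{lem:higher-order} and the private-edge argument. The loop in \autoref{line:brute_force} decides extendability avoiding $Y\cup S$, via \autoref{cor:higher-order} and the rewriting of (\ref{eq:avoiding_S}). Adding $U$ to $Y^+$ is safe because any minimal $T\supseteq X$ containing some $u\in\bigcap\Hyp^*_x$ leaves $x$ without a private edge. The last claim in item 3 holds because every 0- or 1-extension either is $X$ itself, which requires $(\Hyp^{V\setminus X})^*=\emptyset$ and hence has already halted, or uses a vertex of $S\subseteq Y^+$. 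The case $X=\emptyset$ is handled in the same way.

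With this specification, the induction for the first item goes as follows. The sets output directly are exactly the small extensions. The remaining solutions lie in $\Tr(\Hyp)[X,Y^+]$, which splits disjointly according to whether $v_{\min}$ is in $T$, so the two recursive calls cover them. These calls produce no duplicates with the current outputs, because of the last claim of item 3, and they are disjoint from each other. One subtle point is that $V{\setminus}(X\cup Y^+)$ must be nonempty when we branch. This holds because a higher-order extension exists and uses vertices outside $X\cup Y^+$. The measure strictly decreases, so the recursion terminates.

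For the second item I will argue by induction along the recursion that every call \Enum{$X,Y$} made by the algorithm satisfies two properties: $\Tr(\Hyp)[X,Y]$ is nonempty, or the call is the second child of a parent (the exclusion branch); and in either case $X$ has an extension $T$ with $|T{\setminus}X|\ge 1$, or $X$ is the parent's set. More precisely, I will show the following. A call with set $X$ arises either as the root, where $X=\emptyset$ and $0\le k^*-1$ because $\Hyp\ne\emptyset$ gives $k^*\ge 1$, or from a parent $(X',Y')$ whose \Ext returned \textsc{continue}. In the latter case $X'$ has a higher-order extension $T$ with $|T|\ge |X'|+2$, so $|X'|\le k^*-2$. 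Each child has $|X|\le |X'|+1\le k^*-1$. This is the cleanest route, and the key point is precisely that \textsc{continue} certifies an extension with at least two extra vertices.

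I expect the main obstacle to be the specification lemma, in particular the correctness of shrinking to $Y\cup S\cup U$ and the verification that the direct outputs and the recursive outputs partition $\Tr(\Hyp)[X,Y]$ without duplicates.
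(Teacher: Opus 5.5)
Your proposal is correct, modulo one caveat about the pseudocode that the paper's own proof shares (last paragraph). For the first item you take a different route from the paper; for the second you take the same one.

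\textbf{Second item.} This is the paper's argument. You look at the parent call \Enum{$X',Y'$} and note that \Ext{$X',Y'$} returned \textsc{continue}. So $X'$ has a minimal extension with at least $|X'|+2$ elements, which gives $|X|\le|X'|+1\le k^*-1$; the paper phrases the same reasoning as a contradiction. Drop the opening ``two properties'' sentence of that paragraph. It is not needed, and it is false as stated: a first child $(X'\cup\{v_{\min}\},Y^+)$ can have $\Tr(\Hyp)[X'\cup\{v_{\min}\},Y^+]=\emptyset$, e.g.\ if $v_{\min}$ lies in no edge.

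\textbf{First item.} The paper assumes the relaxed bound $|X|\le k^*$ and inducts on $k^*-|X|$. Its base case is that for $|X|=k^*$ the set $X$ is its only possible minimal extension. But the exclusion call \Enum{$X, Y^+\cup\{v_{\min}\}$} keeps $|X|$ fixed, so that measure does not decrease there, and an inner induction is tacitly needed. Your measure $|V\setminus(X\cup Y)|$ decreases in both branches, so the first item no longer depends on the second. Your extra clause that $\Tr(\Hyp)[X,Y^+]$ contains no $0$- or $1$-extension also gives duplicate-freeness. The paper leaves that implicit, although \autoref{lem:delay_bound} asserts ``exactly once''. The paper's version is shorter; yours is more self-contained and closes these small holes.

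\textbf{Caveat.} Both your proof and the paper's rest on the specification of \Ext{$X,Y$}, and the printed pseudocode meets it only after two small corrections.
\begin{itemize}
\item Since $x\in E\setminus Y$ for every $E\in\Hyp_x$, the set $U$ of \autoref{line:retain_candicacy} contains $X$. Getting $Y^+\cap X=\emptyset$ therefore requires using $U\setminus X$; your private-edge argument only covers $u\notin X$.
\item For $X=\emptyset$, the faithful specialization of the loop in \autoref{line:brute_force} has one empty combination with $W=S$. So it should halt iff some reduced edge is contained in $S$, not iff $|\Hyp^*|=1$. With the printed test, take $\Hyp=\{\{a\},\{a,b\}\}$ on $V=\{a,b\}$. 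Then \Ext{$\emptyset,\emptyset$} returns \textsc{continue with} $\{a\}$, although $\{a\}$ is the only minimal hitting set. Next \Enum{$\{b\},\{a\}$} is called with $|X|=1=k^*$, so the second item itself fails.
\end{itemize}
Hence your remark that the case $X=\emptyset$ is handled in the same way holds only for the corrected test.
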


\begin{proof}
	We prove the first clause under the assumption that a relaxation of the second one indeed holds,
	namely, that $|X| \le k^*$.
	This enables a proof by induction over $k^* - |X|$.
	If $X$ is precisely as large as transveral rank, then either $X$ is either a minimal hitting set
	(its own 0-extension)
	or does not have a minimal extension.
	The call to \Ext{$X, Y$} in \autoref{line:call_extendable} therefore returns \textsc{halt},
	possibly after outputting the solution $X$.
	Procedure \Enum{$X, Y$} then returns without any further actions.
	
	Now assume $|X| < k^*$.
	If all solutions in $\Tr(\Hyp)[X,Y]$ are 0- or 1-extensions of $X$,
	the claim follows as before since \Ext{$X, Y$} outputs those and then returns \textsc{halt}.
	Otherwise, it returns \textsc{continue with} $Y^+$ for some set 
	$Y \subseteq Y^+ \subseteq V{\setminus}X$ such that all higher-order extensions in $\Tr(\Hyp)[X,Y]$ 
	are in fact in $\Tr(\Hyp)[X,Y^+]$.
	In particular, there exists at least one such extension and 
	therefore $V{\setminus}(X \cup Y^+)$ is non-empty.
	Consider the next vertex $v_{\min} = \min_{\le} V{\setminus}(X \cup Y^+)$.
	Inductively, every solution that contains $v_{\min}$ is enumerated by the call
	to \Enum{$X \cup \{v_{\min}\}, Y^+$} in \autoref{line:recursive_call_1},
	the remaining solutions are found by \Enum{$X, Y^+ \cup \{v_{\min}\}$}
	in \autoref{line:recursive_call_2}.
	
	The second clause states that $|X| \le k^* - 1$ holds throughout.
	To reach a contradition, assume that at some point \Enum{$X, Y$} is called for some
	set $X$ with at least $k^*$ elements.
	Since $\Hyp$ is non-empty, $k^*$ is positive and thus $(X,Y) \neq (\emptyset,\emptyset)$
	is not the root of the search tree.
	Let $(A,B)$ be the parent node of $(X,Y)$.
	In particular,
	during the execution of \Enum{$A,B$},
	the call to \Ext{$A, B$} (\autoref{line:call_extendable})
	returned \textsc{continue with} $B^+$,
	meaning, we have $B^+ = Y$.
	Therefore, there exists a minimal hitting set in $T \in \Tr(\Hyp)[A,Y]$
	that has at least two more vertices than $A$.
	The sets $A$ and $X$ are either equal
	or $X$ has exactly one additional vertex, namely, 
	$v_{\min}  = \min_{\le} V{\setminus}(A \cup Y)$.
	The cardinality of that minimal extension must therefore be 
	\begin{equation*}
		|T| \ge |A|+2 \ge |X| +1 \ge k^* + 1.
	\end{equation*}
	This is a contradiction to $k^*$ being the maximum cardinality
	of any minimal hitting.
\end{proof}


We now have the tools ready for the delay bound.

\begin{lemma}
\label{lem:delay_bound}
	Let $(V,\Hyp)$ be a non-empty hypergraph with $n$ vertices, $m$ edges,
	maximum degree $\Delta$, and transversal rank $k^*$. 
	\autoref{alg:enumerate_recursively} on input $\Hyp$ outputs every edge of $\Tr(\Hyp)$ exactly once
	with delay $O(\Delta^{k^*-1} \nwspace mn^2)$. The execution takes $O(mn)$ space.
\end{lemma}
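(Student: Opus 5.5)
The proof combines the two clauses of \autoref{lem:technical_delay} with a cost analysis of a single call to \Ext and a bound on the number of search-tree nodes visited between two consecutive outputs.

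\emph{Correctness.} Apply the first clause of \autoref{lem:technical_delay} to the root call \Enum{$\emptyset,\emptyset$}: every edge of $\Tr(\Hyp)=\Tr(\Hyp)[\emptyset,\emptyset]$ is output. It is output only once, for two reasons. First, the two recursive calls in lines~\ref{line:recursive_call_1} and~\ref{line:recursive_call_2} partition $\Tr(\Hyp)[X,Y^+]$ according to whether $v_{\min}$ is included. Second, within a single node the sets output by \Ext are distinct, and they are exactly the $0$- and $1$-extensions of $X$. These extensions are never passed on to the children, because the children work with $Y^+\supseteq Y\cup S$. By \autoref{cor:higher-order}, every minimal extension avoiding $S$ is a higher-order extension, so no $0$- or $1$-extension can reappear below.

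\emph{Cost of one node.} By the second clause of \autoref{lem:technical_delay}, every call satisfies $|X|\le k^*-1$. The preprocessing (building $\Hyp^*_x$ and $(\Hyp^{V\setminus X})^*$, and computing $S$ and $U$) takes $O(mn)$ time. The brute-force loop ranges over $\prod_{x\in X}|\Hyp^*_x|\le\Delta^{|X|}\le\Delta^{k^*-1}$ combinations, and each combination is tested in $O(mn)$ time. Hence each call costs $O(\Delta^{k^*-1}mn)$. Space is $O(mn)$: the hypergraphs built inside \Ext are disjoint subhypergraphs of $\Hyp^*$, and they can be discarded before recursing. The recursion stack itself has depth at most $n$ and stores only the $O(n)$-size sets $X$ and $Y$ per level, so it adds $O(n^2)\subseteq O(mn)$ once we assume $m\ge n$; otherwise we encode $X,Y$ incrementally, storing only $v_{\min}$ and the added part of $Y^+$ per level, for $O(n)$ total.

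\emph{Number of nodes between outputs.} This is the main step. The key invariant is that every call \Enum{$X,Y$} that is actually invoked has $\Tr(\Hyp)[X,Y]\neq\emptyset$. The root satisfies this because $\Hyp\neq\emptyset$ implies $\Tr(\Hyp)\neq\emptyset$ (we may assume $\emptyset\notin\Hyp$; otherwise line~\ref{line:unhit} halts immediately). A child is invoked only when the parent's \Ext returned \textsc{continue with} $Y^+$, which certifies a higher-order extension $T'\in\Tr(\Hyp)[X,Y^+]$. Then $T'$ belongs to one of the two children, but the other child could still be empty. We therefore charge as follows. Every internal node has a nonempty subtree-solution set, so every root-to-leaf path of nonempty nodes ends in a node that outputs something or has outputting descendants. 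A child with empty $\Tr(\Hyp)[X,Y]$ is a leaf that halts after a single \Ext call. Each level strictly shrinks $V\setminus(X\cup Y)$, so the depth is at most $n$.

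Now consider the depth-first traversal between two consecutive outputs. We first backtrack up to at most $n$ ancestors, each time doing no extra \Ext work, since the work is done on entry. We then descend into the next nonempty branch. Every node on this descent is nonempty, and along the way we pay for at most one empty sibling per level. After at most $n$ levels we reach a node whose \Ext call outputs a $0$- or $1$-extension. Before the first output and after the last one the same argument applies. So $O(n)$ \Ext calls separate any two outputs, and the delay is $O(n)\cdot O(\Delta^{k^*-1}mn)=O(\Delta^{k^*-1}mn^2)$.

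The step requiring most care is this last one: verifying the nonemptiness invariant and the fact that the first nonempty descendant reached actually outputs within $O(n)$ levels. That depth bound follows from the strict shrinking of $V\setminus(X\cup Y)$.
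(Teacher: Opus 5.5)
Your proof is correct and follows the paper's route: completeness comes from the first clause of the technical lemma, the second clause gives an $O(\Delta^{k^*-1}mn)$ cost per extension call, and the depth-$n$ preorder search gives $O(n)$ calls between outputs, with your uniqueness, space, and charging arguments making explicit what the paper leaves implicit. Two small blemishes do not affect the bound: the invariant you first state (every invoked call is nonempty) is false, as you note yourself, and during backtracking each empty right child of an ancestor also costs one extension call, which you do not count, but there is at most one such call per level, so the $O(n)$ count still holds.
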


\begin{proof}
	The initial call of \autoref{alg:enumerate_recursively} is \Ext{$\emptyset, \emptyset$}.
	The first clause of \autoref{lem:technical_delay} thus implies that 
	$\Tr(\Hyp) = \Tr(\Hyp)[\emptyset, \emptyset]$ is enumerated entirely.
	The $O(mn)$ space requirement follows from the
	the extension subroutine using this much space and
	the tree search merely maintaining the current pair $(X,Y)$.
	
	Regarding the delay, observe that between a parent and child node of the search tree
	there is always at least one new node added to either $X$ or $Y$.
	The depth of the tree is thus at most $n$.
	The tree is explored in preorder and thus at most $O(n)$ calls to \Ext{$X, Y$}
	are executed before the first output, between any two consecutive outputs,
	or after the last output until termination.

	In the worst case, the next output occurs for a set $X$ with $|X| = k^*-1$.
	Any of the $O(n)$ calls in-between thus take time $O(\Delta^{k^*-1} \nwspace mn)$,
	which amounts to a delay of $O(\Delta^{k^*-1} \nwspace mn^2)$.
	Note that this also holds for the time between the last output and termination
	since, in the worst case, 
	for each node on the path from the source $(\emptyset,\emptyset)$ to the last leaf,
	the algorithm needs to verify that the \emph{other} child node indeed does not lead to
	any solutions.
\end{proof}


\subsection{Computational Complexity}
\label{subsec:delay_complexity}

The previous section raised two computational problems.
Most prominently, we needed an algorithm to decide whether $X$ has a higher-order extension.
We refer to this as ``Problem~(a)''.
The other one is a bit more subtle.
Consider the recursive calls to \Enum in lines~\ref{line:recursive_call_1} 
and \ref{line:recursive_call_2} of \autoref{alg:enumerate_recursively}.
The second argument $Y^+ = Y \cup S \cup U$ states that, 
if $X$ has a higher-order extension avoiding $Y$ at all,
none of them contains a vertex of $S \cup U$.
In the same fashion, it would speed up the tree search if we could identify a set $X^+ = X \cup W$
such that, if $X$ is extendable to a minimal hitting set at all,
then $W$ is contained in each such extension (``Problem~(b)'').
We now settle the computational complexity of both problems.
It turns out that they are, respectively, complete for the classes \NP and $\W[3]$
as well as \co\NP and $\co\W[3]$.
All our reductions use the parameterized extension problem
for minimal hitting sets.
\vspace*{.5em}

\noindent
\ExtHS
\vspace*{.25em}
\begin{description}
	\item [Input:] A hypergraph $(V,\Hyp)$ and a set $X \subseteq V$ of vertices.
	\vspace*{.25em}
	\item [Parameter:] The cardinality $|X|$.
	\vspace*{.25em}
	\item [Decision:] Is $X$ contained in a minimal hitting set for $\Hyp$.
\end{description}

\noindent
Note \ExtHS does not require that the sought minimal hitting set extending $X$ has a certain size.
The problem is \NP-complete and $\W[3]$-complete,
see~\cite{Blaesius22EfficientlyJCSS,Boros98Subimplicants}.

\begin{theorem}
\label{thm:W3_higher-order}
	Deciding whether a set $X$ has a higher-order minimal extension (Problem~(a))
	is \emph{\NP}-complete, and \emph{{\W}[3]}-complete when parameterized by the cardinality $|X|$.
\end{theorem}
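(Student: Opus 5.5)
Membership and hardness are handled separately.

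\emph{Membership.} By \autoref{lem:higher-order}, $X$ has a higher-order extension if and only if $X$ has a minimal extension in $\Hyp$ that avoids $S = \bigcap \Hyp^{V\setminus X}$. This is the same as the extension problem with forbidden set $Y = S$, which by \autoref{cor:char_extendable} is characterized by the existence of a family of edges. Computing $S$ takes polynomial time. We then guess the family $\{E_x\}_{x\in X}$ and verify both conditions in polynomial time, which puts the problem in \NP. For $\W[3]$-membership, we reduce in the same way to \ExtHS with a forbidden set. The forbidden set can be removed by deleting $S$ from every edge, giving the reduced hypergraph $\Hyp^*$ as remarked before \autoref{cor:higher-order}. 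One corner case needs care: if some edge becomes empty, the instance is a no-instance, and we map it to a fixed no-instance. This is a parameterized reduction that keeps $|X|$, so $\W[3]$-membership follows from that of \ExtHS~\cite{Blaesius22EfficientlyJCSS}. The case $|\Hyp^{V\setminus X}|\le 1$ is decided trivially.

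\emph{Hardness.} We reduce \ExtHS to Problem~(a), keeping the parameter equal to $|X|$. Given $(\Hyp, X)$, we add two fresh vertices $a, b$ and replace every edge $E$ disjoint from $X$ by the two edges $E\cup\{a\}$ and $E\cup\{b\}$. We also add the edges $\{a\}\cup X$ and $\{b\}\cup X$, or some similar gadget. The intent is that $S$ becomes exactly $\{a\}$ or $\{b\}$ shifted into all unhit edges. The construction should be chosen so that $S = \emptyset$ or some controlled set, and so that single vertices cannot complete $X$ while any minimal extension of $X$ in $\Hyp$ has at least two new vertices.

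A cleaner option I would try first is the following. Replace each vertex $v\notin X$ by two copies $v', v''$, so that every edge $E$ becomes $E$ with all non-$X$ vertices doubled. We also add the edges $\{v',v''\}$. Done naively, this breaks minimality, so instead I would use the simpler padding: add two new vertices $a,b$ and the edges $\{a,b\}$ and $E\cup\{a\}$ for each $E\in\Hyp^{V\setminus X}$. Then a single vertex $a$ hits every unhit edge. Since the set $\{a,b\}$ itself is unhit, we get $S = \{a\}\cap\dots = \emptyset$ unless a further condition holds, so \autoref{lem:higher-order} gives that higher-order extensions correspond to extensions avoiding $S$.

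One must check the correspondence in both directions. A minimal extension $T$ of $X$ in $\Hyp$ has $|T\setminus X|\ge 1$. In the new instance, $T\cup\{b\}$ is minimal, with $\{a,b\}$ as private edge for $b$, and it is of higher order. Conversely, a higher-order extension must contain $b$ and not $a$. Otherwise $a$ alone would make everything hit, and then $X\cup\{a\}$ would be a hitting set, contradicting minimality of a larger set. Restricting to $V$ therefore gives a minimal extension in $\Hyp$. The corner case is when $X$ is already a hitting set of $\Hyp$; such instances are decided directly.

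The main obstacle is getting this gadget exactly right, so that private edges survive and the excluded vertex $a$ really forces the extension to be higher-order. This reduction is polynomial and preserves $|X|$, which gives both \NP- and $\W[3]$-hardness.
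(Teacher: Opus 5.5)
Your membership argument is the same as the paper's: compute $S=\bigcap \Hyp^{V\setminus X}$, apply \autoref{lem:higher-order}, and strip $S$ from every edge to get an \ExtHS instance with the same parameter. For hardness you take a genuinely different route, and your final construction is correct.

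The paper uses no gadget. Whether $X$ has a 0- or 1-extension can be decided in polynomial time, by testing $X$ and every $X\cup\{v\}$ for being a minimal hitting set. If such an extension exists, the \ExtHS instance is a yes-instance and is mapped to a trivial yes-instance. Otherwise every minimal extension of $X$ is automatically higher-order, so $(V,\Hyp,X)$ itself is output.

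Your padding $\Hyp'=\Hyp(X)\cup\{E\cup\{a\}\mid E\in\Hyp^{V\setminus X}\}\cup\{\{a,b\}\}$ also works whenever $X$ is not a hitting set of $\Hyp$, which is the case you rightly set aside. A higher-order extension $T'$ cannot contain $a$, since $X\cup\{a\}$ already hits $\Hyp'$. So it contains $b$. Deleting $b$ yields a minimal extension in $\Hyp$, because the private edge of any $t\in T'\setminus\{b\}$ is either an edge of $\Hyp(X)$ or of the form $E\cup\{a\}$ with $a\notin T'$. You only asserted this step, so write it out. Conversely, $T\mapsto T\cup\{b\}$ works with $\{a,b\}$ as the private edge of $b$, and $|T\setminus X|\ge 1$ makes the result higher-order.

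You should fix one statement. The intersection of the unhit edges of $\Hyp'$ is $\{a\}$, not $\emptyset$, because $a$ lies in $\{a,b\}$ and in every $E\cup\{a\}$. This is exactly why the gadget works: \autoref{lem:higher-order} then identifies the higher-order extensions with the minimal extensions avoiding $a$. Also drop the abandoned gadgets (the edges $\{a\}\cup X$ and $\{b\}\cup X$, and the vertex doubling) and the hedging, and commit to the one construction.

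As for what each route buys: the paper's reduction is shorter and leaves the instance unchanged, so every fine-grained bound for \ExtHS transfers verbatim. Yours is a uniform many-one gadget that only needs to detect whether $X$ is already a hitting set, not whether 1-extensions exist. Its cost is two new vertices and one new edge, which is harmless for \NP- and $\W[3]$-hardness and also for the bounds in \autoref{cor:fine-grained}.
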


\begin{proof}
	As mentioned, we reduce from and to the \ExtHS.
	Its complexity is unaffected by the input including an additional set $S$ 
	that the sought minimal solution has to avoid.
	We can simply remove $S$ from every edge of the input hypergraph,
	see~\cite{Blaesius22EfficientlyJCSS}.	
	We show that the higher-order problem is equivalent to the (unrestricted) extension problem
	under polynomial-time reductions that preserve the parameter.

	Since $S = \bigcap \, \Hyp^{V{\setminus}X}$ is computable from $X$ and $\Hyp$
	in polynomial time,
	\autoref{lem:higher-order} already spells out the reduction
	that proves containment in $\NP$ and $\W[3]$.
	For the hardness, let $(V,\Hyp)$ be a hypergraph 
	and $X \subseteq V$ the set for which we have to decide the existence
	of a minimal extension without any cardinality restrictions.
	We check whether $X$ is itself a minimal hitting set 
	or can be turned to one by adding a single vertex from $V$.
	If so, the reduction produces a trivial yes-instance for the higher-order variant.
	Otherwise, we decide the existence of an higher-order extension
	for the original instance $(V,\Hyp,X)$.
\end{proof}

\begin{theorem}
\label{thm:W3_include_U}
	Deciding for disjoint sets $X, W$ whether every minimal extension of $X$ includes $W$
	(Problem~(b))
	is \emph{\co\NP}-complete, and \emph{{\co\W}[3]}-complete
	when parameterized by $|X|$.
	The problem is \emph{\para-\co\NP}-complete when paramterized by $|W|$ instead. 
	The complexity is unaffected by the promise that $X$ indeed has a minimal extension.
\end{theorem}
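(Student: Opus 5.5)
The plan is to prove membership by looking at the complement, and hardness by a single gadget reduction from \ExtHS. The gadget is chosen so that $|W|=1$ and the promise holds automatically.

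\textbf{Membership.} The complement of Problem~(b) asks whether some minimal extension $T\supseteq X$ misses a vertex of $W$. Equivalently, it asks whether there are $w\in W$ and $T\in\Tr(\Hyp)[X,\{w\}]$. For each fixed $w$ this is the extension problem avoiding $Y=\{w\}$. By the remark in the proof of \autoref{thm:W3_higher-order}, we can reduce it to \ExtHS by deleting $w$ from every edge.

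This immediately gives membership of the complement in \NP. A certificate consists of $w$ together with the family $\{E_x\}_{x\in X}$ from \autoref{cor:char_extendable}, and it is checkable in polynomial time.

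For $\W[3]$ membership of the complement (parameter $|X|$), I would not take an OR over $|W|$ separate instances, since that could increase the weft. Instead, I would adapt the weft-$3$ circuit witnessing $\ExtHS\in\W[3]$ from \cite{Blaesius22EfficientlyJCSS}. I add one extra block of input variables that selects a single $w\in W$, with weight one. The condition ``$E\setminus\{w\}\nsubseteq\bigcup_x E_x\setminus\{w\}$'' is then expressed inside the existing large gates, with the chosen $w$ entering as an extra small-gate input. The new parameter is $|X|+1$.

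This is the step I expect to need the most care. I must verify that the selection of $w$ can be folded into the existing gates without adding a fourth alternation of large gates. If it cannot, the fallback is a direct parameterized reduction of the complement to \ExtHS. That reduction would glue the $|W|$ instances (one hypergraph per $w$, with $w$ deleted) into a single hypergraph using fresh selector vertices.

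\textbf{Hardness.} I reduce from the complement of \ExtHS. Let $(V,\Hyp,X)$ be an instance. If $X$ is already a hitting set of $\Hyp$, the question is decided in polynomial time and we output a trivial instance. So assume $\Hyp^{V\setminus X}\neq\emptyset$.

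Add fresh vertices $a,b,c$ and define
\[
\Hyp'=\{E\cup\{a\}\mid E\in\Hyp\}\cup\{\{x,c\}\mid x\in X\}\cup\{\{a,b\}\}.
\]
Output the instance $(\Hyp',X,W=\{a\})$.

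First, the promise holds. The set $X\cup\{a\}$ is a minimal hitting set of $\Hyp'$: each $x$ has the private edge $\{x,c\}$, and $a$ has the private edge $E\cup\{a\}$ for any $E\in\Hyp^{V\setminus X}$.

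Second, consider a minimal extension of $X$ in $\Hyp'$ that avoids $a$. It must contain $b$ to hit $\{a,b\}$. It must also hit every $E\in\Hyp$ using vertices of $V$, so it has the form $\{b\}\cup T$ with $X\subseteq T\subseteq V$ a hitting set of $\Hyp$.

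I then check that $\{b\}\cup T$ is minimal in $\Hyp'$ if and only if $T$ is minimal in $\Hyp$:
\begin{itemize}
\item the vertices $x\in X$ keep their private edges $\{x,c\}$;
\item $b$ keeps its private edge $\{a,b\}$;
\item each $t\in T\setminus X$ has a private edge in $\Hyp'$ exactly when it has one in $\Hyp$, because the gadget edges do not contain $t$, and $E\cup\{a\}$ is private for $t$ exactly when $E$ is.
\end{itemize}

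Hence every minimal extension of $X$ in $\Hyp'$ includes $W=\{a\}$ if and only if $X$ has no minimal extension in $\Hyp$. The parameter $|X|$ is preserved, which yields $\co\W[3]$-hardness. Since $|W|=1$ is constant and \ExtHS is \NP-complete, we also get \co\NP-hardness and \para-\co\NP-hardness for the parameter $|W|$. Together with membership of Problem~(b) in \co\NP, which settles \para-\co\NP membership, this gives completeness.
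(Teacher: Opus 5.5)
\textbf{The hardness reduction fails.} Your membership argument is sound in substance. Reading the complement as ``there is $w \in W$ and a minimal extension of $X$ avoiding $w$'', and folding the choice of $w$ into the weft-3 circuit, is the right way to handle $|W| \ge 2$. You do not need the selector-vertex fallback.

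The problem is the edges $\{x,c\}$. They give every $x \in X$ a private edge in $\Hyp'$ unconditionally. So the requirement that each $x \in X$ keep a private edge \emph{in $\Hyp$}, which is exactly what makes \ExtHS hard, is lost. Your check that ``$\{b\} \cup T$ is minimal in $\Hyp'$ iff $T$ is minimal in $\Hyp$'' only compares private edges of vertices in $T \setminus X$. For $x \in X$ the ``only if'' direction fails.

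Concretely, let $V = \{x,v\}$, $\Hyp = \{\{v\},\{x,v\}\}$ and $X = \{x\}$. $X$ is not a hitting set, and it has no minimal extension: any hitting set containing $x$ contains $v$, and then $x$ has no private edge. Yet $\Hyp' = \{\{v,a\},\{x,v,a\},\{x,c\},\{a,b\}\}$ has the minimal hitting set $\{x,v,b\}$, with private edges $\{x,c\}$, $\{v,a\}$, $\{a,b\}$. It contains $X$ and avoids $a$. So a no-instance of \ExtHS is mapped to a no-instance of Problem~(b).

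More generally, $X \cup M \cup \{b\}$ is a minimal extension avoiding $a$ for every minimal hitting set $M$ of $\Hyp^{V\setminus X}$. Hence your reduction outputs a no-instance whenever $\emptyset \notin \Hyp$.

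\textbf{How to fix it.} The gadget was only needed because you added $a$ to \emph{every} edge, which destroys the candidate private edges of $X$ inside $X \cup \{a\}$.
\begin{itemize}
\item Add the fresh vertex only to the edges not hit by $X$: use $\Hyp(X) \cup \{E \cup \{a\} \mid E \in \Hyp^{V\setminus X}\}$ with $W = \{a\}$, and drop $\{x,c\}$ and $\{a,b\}$.
\item Besides the case that $X$ is a hitting set, also filter out the case that some $x \in X$ has no edge $E$ with $E \cap X = \{x\}$. Then $X$ is not extendable, and you output a trivial yes-instance.
\end{itemize}
After this, each $x$ keeps a candidate private edge in $\Hyp(X)$, which does not contain $a$. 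The vertex $a$ is private in every $E \cup \{a\}$. So $X \cup \{a\}$ is a minimal hitting set and the promise holds.

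Moreover, for $X \subseteq T \subseteq V$, the set $T$ is a minimal hitting set of the new hypergraph iff it is one of $\Hyp$. In both hypergraphs, private edges of $x \in X$ come from $\Hyp(X)$, and private edges of $t \in T \setminus X$ come from (the images of) $\Hyp^{V\setminus X}$.

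This is the construction in the paper's proof. The only difference is that the non-redundancy filter is made explicit here; the paper's claim that ``every $x \in X$ has a private hyperedge in $\Hyp(X)$'' tacitly needs it.
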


\begin{proof}
	We establish the theorem by reducing the yes-instances of the ``include-$W$'' extension problem
	to \emph{no}-instances of \ExtHS, and vice versa.
	One reduction is immediate from the observation that every minimal extension of $X$ contains $W$
	if and only if $X$ \emph{cannot} be extended without $W$.
	As mentioned in the proof of \autoref{thm:W3_higher-order},
	augmenting the input of \ExtHS with a set the extension needs to avoid does not change anything.
	
	To show hardness,
	let $(V,\Hyp,X)$ be an instance of \ExtHS.
	If $X$ is a minimal hitting set, we output a trivial 
	no-instance of Problem~(b).
	If this does not resolve the instance, we proceed as follows.
	Recall that $\Hyp(X)$ is the subhypergraph of those edges that intersect $X$.
	Since $X$ is not a hitting set, there is at least one unhit edge in $\Hyp^{V{\setminus}X}$.
	Let now $W = \{w\}$ be a singleton set whose unique element $w$ is not in $V\!$.
	We define a hypergraph $\Gyp$ on the vertex set $V \cup W$ by setting
	\begin{equation*}
		\Gyp = \Hyp(X) \cup \{E \cup W \mid E \in \Hyp^{V{\setminus}X}\}.
	\end{equation*}

	Every $x \in X$ has a private hyperedge in $\Hyp(X) \subseteq \Gyp$,
	so $X \cup W$ is a minimal hitting set for $\Gyp$.
	In particular, the promise that $X$ is extendable in $\Gyp$ holds.
	The set $X$ is \emph{not} extendable in the original hypergraph $\Hyp$ if and only if
	every minimal extensions in $\Gyp$ must include $W$.
	The fact that the problem is already hard for $|W| = 1$ gives the \para-\co\NP-completeness.
\end{proof}

Observe that the two hardness reductions can be computed in time $O(mn^2)$,
do not increase the number of edges $m$ or parameter $|X|$,
and add at most one new vertex.
Hence, we inherit all fine-grained lower bounds from the original extension problem
\cite[Theorem~4]{Blaesius22EfficientlyJCSS}.
In particular, our $O(\Delta^{|X|} \nwspace mn)$ running time for the higher-order extensions is tight,
up to possibly a factor $m$ if randomization is introduced.

\begin{corollary}
\label{cor:fine-grained}
	There exists \emph{no} algorithm to solve any of the Problems~(a) or (b)
	\vspace*{.25em}
	\begin{enumerate}
		\item in time $f(|X|) \cdot (m{+}n)^{o(|X|)}$ for any computable function $f$,
			unless $\emph{\W}[2] = \emph{\FPT}$;
		\vspace*{.25em}
		\item in time $m^{|X|-\varepsilon} \cdot \poly(n)$ 
			for any constant $|X| \ge 2$ and $\varepsilon > 0$,\\[.25em]
			unless the Strong Exponential Time Hypothesis fails.
	\end{enumerate}
	\vspace*{.25em}
	Moreover, no \emph{deterministic} reduction can prove SETH-hardness 
	of (a) or (b) with time bound $m^{|X|+1-o(1)} \cdot \poly(n)$,
	unless the Nondeterministic Strong Exponential Time Hypothesis fails.
\end{corollary}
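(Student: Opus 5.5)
The plan is to transfer the known fine-grained lower bounds for \ExtHS \cite[Theorem~4]{Blaesius22EfficientlyJCSS} to Problems~(a) and~(b) through the reductions in the proofs of \autoref{thm:W3_higher-order} and \autoref{thm:W3_include_U}. Both hardness reductions map an instance $(V,\Hyp,X)$ of \ExtHS to an instance of (a) or (b), in the latter case to the complement. They run in time $O(mn^2)$, keep the number of edges at most $m$ and the parameter $|X|$ unchanged, and add at most one vertex. So if (a) or (b) could be solved in time $T(m,n,|X|)$, then \ExtHS could be solved in time $T(m,n+1,|X|) + O(mn^2)$. For (b) we answer ``yes'' to \ExtHS exactly when the oracle answers ``no''. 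This is harmless because every bound in question is for deterministic time, and such bounds are closed under complement.

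For the first item, suppose one of the problems had an algorithm with running time $f(|X|)\cdot(m{+}n)^{o(|X|)}$. The composition above would then solve \ExtHS within the same time bound, because the additive $O(mn^2)$ and the shift $n\mapsto n+1$ are absorbed into $(m{+}n)^{o(|X|)}$ once $f$ is increased suitably for small $|X|$. The lower bound for \ExtHS in \cite{Blaesius22EfficientlyJCSS} rules this out unless $\W[2]=\FPT$. That bound comes from a parameterized reduction from \textsc{Dominating Set} or \textsc{Hitting Set} that is linear in the parameter, so the conclusion transfers directly.

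For the second item, fix a constant $|X|\ge 2$ and $\varepsilon>0$. An $m^{|X|-\varepsilon}\cdot\poly(n)$ algorithm for (a) or (b) would give \ExtHS the same bound, since $O(mn^2)\le m^{|X|-\varepsilon}\poly(n)$ whenever $|X|-\varepsilon\ge 1$. We may assume $\varepsilon<1$ without loss of generality. This contradicts the SETH-based bound of \cite[Theorem~4]{Blaesius22EfficientlyJCSS}.

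The NSETH statement is the main obstacle, because it concerns reductions rather than algorithms, and we must check that hardness of (a) or (b) transfers back to \ExtHS. For this we use the reverse reductions, namely \autoref{lem:higher-order} for (a) and the observation ``$X$ cannot be extended without $W$'' for (b). Both map instances to \ExtHS with an avoided set, which is removable by deleting it from all edges. This costs time $O(mn)$ and does not change $m$ or $|X|$. Now suppose a deterministic reduction showed SETH-hardness of (a) or (b) at time $m^{|X|+1-o(1)}\poly(n)$. Composing it with these reverse maps would give the same hardness for \ExtHS. The remaining step is to argue that \ExtHS has co-nondeterministic and nondeterministic verifiers faster than this bound, as in \cite{Blaesius22EfficientlyJCSS}. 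A yes-certificate is the minimal extension itself. A no-certificate is, for every one of the at most $m^{|X|}$ combinations $(E_x)_{x\in X}$, an edge witnessing the violation of \autoref{lem:char_extendable}; checking it takes roughly $m^{|X|}\poly(n)$ time. SETH-hardness at the higher bound would then violate NSETH, following the argument of Carmosino et al.~\cite{Carmosino16NSETH}.
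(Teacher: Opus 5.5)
Your proposal is correct. For the first two items it follows the paper's argument exactly. The paper only notes that the hardness reductions from \ExtHS run in time $O(mn^2)$, do not increase $m$ or $|X|$, and add at most one vertex, so the bounds of \cite[Theorem~4]{Blaesius22EfficientlyJCSS} carry over. Your accounting for the additive $O(mn^2)$ and the shift $n \mapsto n+1$ makes this explicit.

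The real difference is the NSETH clause. The paper covers it with the phrase ``we inherit all fine-grained lower bounds''. But that clause is a barrier, not a lower bound, and barriers do not transfer along reductions \emph{from} \ExtHS, since a priori Problem~(a) could be harder. You use the opposite direction instead. This means the containment maps from \autoref{lem:higher-order} and from deleting an avoided set from all edges. You combine these with nondeterministic and co-nondeterministic verifiers for \ExtHS running in time $m^{|X|}\cdot\poly(n)$, after which the argument of Carmosino et al.\ applies. Your route makes explicit that the corollary relies on the two-sided equivalence from the proof of \autoref{thm:W3_higher-order}, which the paper's one-line justification uses only implicitly. The paper's version is shorter, but read literally it only covers the two algorithmic lower bounds.

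One small repair is needed for (b). When $|W| \ge 2$, ``every minimal extension of $X$ includes $W$'' is not the complement of a single \ExtHS instance with avoided set $W$, because an extension may contain only part of $W$. It is instead the conjunction over $w \in W$ of ``$X$ has no minimal extension avoiding $\{w\}$''. This costs only a factor $|W| \le n$ in your verifiers, so the conclusion is unaffected. The paper's own phrasing has the same imprecision.
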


\subsection{Computing the Transversal Rank}
\label{subsec:delay_TR}

We initially set out to find a way to compute the transversal rank $k^* = \rank(\Tr(\Hyp))$ faster.
We show that the same look-ahead idea can help here as well.

\transversalrankalg*

\begin{proof}
	We first discuss the trivial cases of the \textsc{Transversal Rank} problem
	with parameters $k = 0$ or $k = 1$.
	Any non-empty hypergraph with $\emptyset \notin \Hyp$ has positive transversal rank.
	In more detail, if $\emptyset \in \Hyp$ is an edge, then $\Tr(\Hyp) = \emptyset$ is empty
	and the transversal rank is undefined.
	If $\Hyp = \emptyset$, then $\Tr(\Hyp) = \{\emptyset\}$ and the transversal rank is $0$.
	Otherwise, $k^* \ge 1$ holds.
	Note that for $k = 0$ and $k = 1$, any minimal hitting set witnesses $\rank(\Tr(\Hyp)) \ge k$.
	Computing it takes $O(mn)$ time.

	In the remainder, we assume $k \ge 2$.
	The inequality $k^* \ge k$ is thus equivalent to the existence of a set of $k-2$ vertices
	that has a \emph{higher-order} minimal extension.
	(This also includes the empty set in case $k = 2$.)
	Consider a set $X \subseteq V$ of cardinality $|X| = k-2$
	and define $S = \bigcap \Hyp^{V{\setminus}X}$.
	By \autoref{lem:higher-order}, $X$ is contained in a minimal hitting set of size 
	at least $|X|+2$
	if and only if $\Tr(\Hyp)[X,S]$ is non-empty.
	Using the same arguments that already proved the correctness of 
	(the second half of) \autoref{alg:ExtHS},
	$\Tr(\Hyp)[X,S] \neq \emptyset$ can be decided in time 
	$O(\Delta^{|X|} mn) = O(\Delta^{k-2} \nwspace mn)$.
	Applying this procedure to all \mbox{$(k{-}2)$-sets} of vertices 
	then decides \textup{\textsc{Transversal Rank}} in total time
	$O(\Delta^{k-2} \nwspace mn^{k-1})$.
	
	If the transversal rank is indeed above the given threshold,
	we find a witness as follows.
	Let $X = \{x_1, \dots, x_{k-2}\}$ be the first set 
	the algorithm verifies to be contained in a sufficiently large hitting set.
	The brute-force loop (starting in \autoref{line:brute_force} of \autoref{alg:ExtHS})
	thus computed a suitable selection of hyperedges
	$E_{x_1} \,{\in}\, \Hyp_{x_1}, \dots, E_{x_{k-2}} \,{\in}\, \Hyp_{x_{k-2}}$
	such that, for every edge $E \in \Hyp^{V{\setminus}X}$,
	we have $E \nsubseteq S \cup \bigcup_{i=1}^{k-2} E_{x_i}$.
	Therefore, the set
	\begin{equation*}
		M = X \cup V \nwspace {\setminus} \!\left(S \cup \bigcup_{i=1}^{k-2} E_{x_i}\right)
	\end{equation*}
	is a (not necessarily minimal) hitting set of $\Hyp$.
	We claim that \emph{any} minimal hitting set $T \subseteq M$
	is an extension of $X$ with at least $|X|+2 = k$ elements.
	It can be computed from $\Hyp$ and $M$ in time $O(mn)$, see \autoref{app:intro_proofs}.	
	To verify the claim, first observe that any such $T$ must contain all of $X$.
	If some $x_i$ were missing, the edge $E_{x_i}$ would be unhit.
	Moreover, since $T$ is disjoint from $S$, it must have at least 2 more elements than
	$X$ by \autoref{lem:higher-order}.
\end{proof}

\section{Conformal Degree and Hyperclique Enumeration}
\label{sec:equivalence}

We have already seen a connection between computing the transversal rank
and enumerating minimal hitting sets.
We now show a quantitative equivalence to computing the conformal degree
and enumerating maximal hypercliques.

\mainequivalence*

Some of the implications in \autoref{thm:main_equivalence} have been known before
or can easily be derived from prior work.
\TransRank and \ConfDeg are sister problems in the following sense.
Recall that $\overline{\Hyp} = \{V{\setminus}E \mid E \in \Hyp\}$ 
is the hypergraph of the complement edges.
The following is a reformulation of \autoref{lem:BergeDuchet_subhypergraph}
in terms of conformal hypergraphs.

\begin{lemma}[Berge and Duchet~\cite{BergeDuchet75GilmoresTheorem};
see Corollary~1, p.~58 of~\cite{Berge89Hypergraphs}]
\label{lem:BergeDuchet_conformal}
	The transversal rank of $\Hyp$ is at least $k$
	if and only if $\overline{\Hyp}$ has conformal degree at least $k$.
\end{lemma}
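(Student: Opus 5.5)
The plan is a direct double-counting-free argument that works straight from the definitions, using the dictionary
\[
  S \subseteq V{\setminus}E \quad\Longleftrightarrow\quad S \cap E = \emptyset
\]
between subsets of complement edges and sets missing edges of $\Hyp$. Under this translation, ``$S$ is contained in no edge of $\overline{\Hyp}$'' means exactly ``$S$ is a hitting set of $\Hyp$''. Likewise, ``$S'$ is contained in some edge of $\overline{\Hyp}$'' means ``$S'$ is not a hitting set of $\Hyp$''. Conformal degree at least $k$ is the statement that $\overline{\Hyp}$ is not $(k{-}1)$-conformal. So I will show that $\rank(\Tr(\Hyp)) \ge k$ holds if and only if there is a set $S$ whose subsets of size at most $k-1$ all lie in edges of $\overline{\Hyp}$, while $S$ itself lies in none.

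For the forward direction, I take a minimal hitting set $T \in \Tr(\Hyp)$ with $|T| \ge k$ and set $S = T$. Since $|T| \ge k$, every subset $S' \subseteq T$ with $|S'| \le k-1$ is a proper subset of $T$. By minimality, $S'$ is not a hitting set, so some $E \in \Hyp$ satisfies $S' \cap E = \emptyset$, that is, $S' \subseteq V{\setminus}E \in \overline{\Hyp}$. On the other hand, $T$ hits every edge, so $T$ is contained in no complement edge. Hence $\overline{\Hyp}$ violates $(k{-}1)$-conformality, and since $k$-conformality is monotone in $k$, its conformal degree is at least $k$.

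For the converse, I take a witness $S$ of non-$(k{-}1)$-conformality of $\overline{\Hyp}$. Because $S$ lies in no complement edge, $S$ is a hitting set of $\Hyp$. Therefore $S$ contains some minimal hitting set $T \subseteq S$, obtained by greedily deleting redundant vertices. Every subset of $S$ of size at most $k-1$ lies in a complement edge, so it misses some edge of $\Hyp$ and is not a hitting set. Consequently $|T| \ge k$, and the transversal rank is at least $k$.

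The only step needing care is the degenerate boundary cases, and I expect this to be the main (minor) obstacle. These are the cases $k \le 1$ (where $S' = \emptyset$ must lie in some edge of $\overline{\Hyp}$, which requires $\Hyp \ne \emptyset$), $\emptyset \in \Hyp$ (where $\Tr(\Hyp)$ is empty), and $\Hyp = \emptyset$. I would dispose of them with the conventions fixed in the proof of \autoref{thm:transversal_rank_algorithm}, checking that both sides agree there. Otherwise the argument is just the translation above, and it also recovers \autoref{lem:BergeDuchet_subhypergraph}'s content without needing the Sperner assumption.
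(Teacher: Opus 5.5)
Your argument is correct and is essentially the one the paper gives: the paper presents the lemma as a reformulation of \autoref{lem:BergeDuchet_subhypergraph}, but its justification is exactly your complement translation. A minimal hitting set of size at least $k$ is a counterexample to $(k{-}1)$-conformality of $\overline{\Hyp}$, and conversely any such counterexample $S$ is a hitting set all of whose minimal sub-hitting sets have size at least $k$. One caveat is that your closing remark overstates things: this translation only gives the rank/conformality equivalence, not the specific $k$-edge witness structure of \autoref{lem:BergeDuchet_subhypergraph}.
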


Note that having a conformal degree of at least $k$ is the same as saying that
$\overline{\Hyp}$ is \emph{not} $(k{-}1)$-conformal.
The complement $\overline{\Hyp}$ can be obtained from $\Hyp$ in time $O(mn)$.
We can thus solve \TransRank with parameter $k$ in time $\poly(m) \cdot n^{k+O(1)}$
if and only if \ConfDeg with parameter $k-1$ is solvable within the same time bound. 
It is also readily checked that the required witnesses for $\rank(\Tr(\Hyp)) \ge k$
and the counterexamples for the $(k{-}1)$-conformality of $\overline{\Hyp}$
are the same.
Indeed, let $S$ be such that all its subset of size at most $k-1$ 
are contained in an edge of $\overline{\Hyp}$.
This is equivalent to none of those subsets being a hitting set for $\Hyp$.
Conversely, $S$ itself not being contained in any edge of $\overline{\Hyp}$
means that $S$ is a hitting set.
Any minimal hitting set $T \in \Tr(\Hyp)$ with $T \subseteq S$ 
must thus have size at least $k$.
One such subset $T$ can be obtained from $S$ in time $O(m |S|)$ 
(see \autoref{app:intro_proofs}).

Now assume we can enumerate the minimal hitting sets of any $r$-hypergraph
in incremental-polynomial time so that the $i$-th delay is bounded by $\poly(i) \cdot n^{r+O(1)}$.
Clearly, this also includes $r$-\emph{uniform} hypergraphs.
A straightforward generalization from graphs ($2$-uniform hypergraphs) shows 
how this can be used to enumerate maximal hypercliques.
Let $(V,\Hyp)$ be a uniform hypergraph with rank $r$.
The complement $\Hyp^\textsf{C} = \{ E \in \binom{V}{r} \mid E \notin \Hyp\}$
contains all the $r$-subsets of $V$ that are not edges of $\Hyp$.
Let $C \subseteq V$ be a maximal hyperclique in $\Hyp$,
then $C$ is a maximimal independent set in $\Hyp^\textsf{C}$, 
meaning that $C$ does not contain any of the complement edges.
This is equivalent to $V{\setminus}C$ being a minimal hitting set of $\Hyp^\textsf{C}$.
Conversely, for any $T \in \Tr(\Hyp^\textsf{C})$, $V{\setminus}T$ is a maximal clique of $\Hyp$.
The complement $\Hyp^\textsf{C}$ can be computed in time $O(mn^r)$,
hence $\C(\Hyp)$ is enumerable with delay $\poly(i) \cdot n^{r+O(1)}$
iff $\Tr(\Hyp^\textsf{C})$ is.

\subsection{Faster Decision via Faster Enumeration}
\label{subsec:equiv_hyperclique}

What is left to do is to connect the decision problems regarding
the transversal rank/conformal degree with the enumeration problems
of minimal hitting sets/maximal hypercliques in both directions.
We first assume access to an algorithm that enumerates the hypercliques
of an $r$-uniform hypergraph in iterative-polynomial time $\poly(i) \cdot n^{r+O(1)}$.
We derive from it an $\poly(m) \cdot n^{k+O(1)}$ algorithm
to solve \ConfDeg for \emph{non-uniform} input hypergraphs $\Hyp$ and arbitrary parameter $k$.

Let $[\Hyp]_{k} = \{ S \in \binom{V}{k} \mid \exists E \in \Hyp \colon S \subseteq E\}$
be the $k$-\emph{section} of $\Hyp$,
the hypergraph that contains as edges the $k$-sets of vertices that appear together in an edge of $\Hyp$.

\begin{lemma}
\label{lem:conformality}
	The hypergraph $\Hyp$ is $k$-conformal if and only if $\C([\Hyp]_k) \subseteq \Hyp$.
\end{lemma}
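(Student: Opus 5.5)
The plan is to prove both directions directly from the definitions. The only ingredients are two facts. First, a set $C$ with $|C| \ge k$ is a hyperclique of $[\Hyp]_k$ exactly when each of its $k$-subsets lies in some edge of $\Hyp$. Second, every hyperclique of the finite hypergraph $[\Hyp]_k$ is contained in a maximal one.

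For the forward direction, assume $\Hyp$ is $k$-conformal and let $C \in \C([\Hyp]_k)$. Since $|C| \ge k$, every subset $S' \subseteq C$ with $|S'| \le k$ can be enlarged to a $k$-subset of $C$. That $k$-subset is an edge of $[\Hyp]_k$ and hence lies in some $E' \in \Hyp$, so $S'$ lies in $E'$ as well. Thus $C$ satisfies the second condition in the definition of $k$-conformality, and we obtain an edge $E \in \Hyp$ with $C \subseteq E$. Now $|E| \ge |C| \ge k$, and every $k$-subset of $E$ is an edge of $[\Hyp]_k$ by the definition of the section, so $E$ is itself a hyperclique containing $C$. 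Maximality of $C$ forces $C = E$, so $C \in \Hyp$.

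For the backward direction, assume $\C([\Hyp]_k) \subseteq \Hyp$. Take any $S \subseteq V$ all of whose subsets of size at most $k$ lie in edges of $\Hyp$; we must find an edge containing $S$, since the converse implication in the definition is trivial. If $|S| \le k$, then $S$ is one of its own subsets of size at most $k$, so it lies in an edge by assumption. If $|S| > k$, every $k$-subset of $S$ is an edge of $[\Hyp]_k$, so $S$ is a hyperclique of $[\Hyp]_k$. It therefore extends to some maximal hyperclique $C \supseteq S$, and by hypothesis $C \in \Hyp$. This gives the required edge.

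No step is a genuine obstacle. The only point needing care is the size boundary: hypercliques are required to have at least $k$ vertices, and sets $S$ with $|S| \le k$ must be handled separately, as done above. One also uses that an edge $E$ containing a maximal clique automatically has $|E| \ge k$, so it qualifies as a hyperclique.
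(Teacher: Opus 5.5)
Your proof is correct and follows the same route as the paper's. In the forward direction you apply $k$-conformality to a maximal hyperclique $C$ and note that the containing edge $E$ is itself a hyperclique of $[\Hyp]_k$, so maximality gives $C = E \in \Hyp$; in the backward direction you extend $S$ to a maximal hyperclique, which lies in $\Hyp$ by hypothesis. Your separate treatment of $|S| \le k$, and your check that subsets of $C$ of size below $k$ lie in edges, fill in boundary details the paper skips: it simply asserts that $S$ is a hyperclique, which requires $|S| \ge k$.
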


\begin{proof}
	Let $C \in \C([\Hyp]_k)$ be a maximal hyperclique of the $k$-section.
	That means, each of its subsets of size at most $k$ is contained in some hyperedge.
	If $\Hyp$ is $k$-conformal, then $C$ itself is contained in some edge $E \in \Hyp$.
	Furthermore, note that $E$ is also a hyperclique of $[\Hyp]_k$.
	By the maximality of $C$, we get $C = E \in \Hyp$.
	Since $C$ is arbitrary, we have $\C([\Hyp]_k) \subseteq \Hyp$.
	
	Conversely, let $S$ be a set such that each of its subsets of size at most $k$
	is contained in an edge of $\Hyp$.
	In particular, $S$ is a hyperclique in $[\Hyp]_k$.
	Let $C$ be a maximal hyperclique with $C \supseteq S$.
	If $\C([\Hyp]_k) \subseteq  \Hyp$ indeed holds,
	then $C$ is an edge of $\Hyp$ that contains $S$.
	In other words, $\Hyp$ is $k$-conformal.
\end{proof}
We point out that \autoref{lem:conformality} can easily be strengthened to show 
that $k$-conformality is equivalent to $\C([\Hyp]_k)$ containing precisely 
the inclusion-wise maximal edges of $\Hyp$.
However, we only need the weaker version.

The lemma suggests the following algorithm to certify that $\Hyp$ is $k$-conformal
or produce a counterexample $S$.
We first compute the $k$-section of $\Hyp$ in time 
$O(km \cdot \binom{n}{k}) = O(mn^{k})$.
(Note that $\binom{n}{k} = O(n^k/ \nwspace k!)$,
hence the above estimate holds even if $k$ is not constant.)
Clearly, $[\Hyp]_k$ is $k$-uniform.
The assumed hyperclique enumeration algorithm lists the $i$-th member of $\C([\Hyp]_k)$ 
in time $ i^c \cdot n^{k+O(1)}$ for some constant $c$.
For each hyperclique $C$, we check whether $C \in \Hyp$ is an edge.
We defer the description how to perform this check in time $O(n^2)$ until the end of this section.
The enumeration can be aborted after at most $m+1$ hypercliques
as the $(m{+}1)$-st one already certifies $\C([\Hyp]_k) \nsubseteq \Hyp$.

If, at any point, we find a hyperclique $C \in \C([\Hyp]_k){\setminus}\Hyp$,
then, evidently, every subset of $C$ of cardinality at most $k$ is contained in some edge of $\Hyp$.
We claim that $C$ itself is not contained in an edge and is 
therefore a valid output for the search problem \ConfDeg.
By definition, $C$ it is not \emph{equal} to any edge in $\Hyp$.
If it were \emph{strictly} contained in some $E \in \Hyp$,
then $E$ would be a hyperclique in $[\Hyp]_k$ extending $C$,
which is a contradiction to the maximality of $C$.
The algorithm takes total time
\begin{gather*}
	 O(mn^{k}) + \sum_{i=0}^{m+1} i^c \cdot n^{k+O(1)} =
	 	(m{+}1)^{c+1} \cdot n^{k+O(1)} = \poly(m) \cdot n^{k+O(1)}.
\end{gather*}

To support the check whether the clique $C$ is in $\Hyp$,
we spend $O(mn)$ time to preprocess the hypergraph into a dictionary
allowing membership queries in time $O(|C| \cdot \log |\Hyp|) = O(n \log m)$.
Since any hypergraph has at most $m = O(2^n)$ edges,
the query time is $O(n^2)$.

\subsection{Faster Enumeration via Faster Decision}
\label{subsec:equiv_hitting_sets}

We now show the inverse direction of the equivalence.
We assume that  there exists a constant $c \ge 1$ such that
\TransRank is decidable in time $m^c \cdot n^{k+O(1)}$.
We claim that, under this assumption,
the minimal hitting sets of any hypergraph $\Hyp$ with rank $r$ can be enumerated in such a way
that the $i$-th delay is bounded by $i^{c} \cdot n^{r+O(1)}$.
Eiter and Gottlob~\cite{EiterGottlob95RelatedProblems},
were the first to show that this problem admits an incremental-polynomial algorithm
(albeit with a much worse delay).
They used repeated calls to a decision
subroutine to check whether $\Gyp = \Tr(\Hyp)$ holds for a certain hypergraph $\Gyp$.
The subroutine runs in time $|\Gyp|^{r+2} \cdot \poly(n,|\Hyp|)$
using the lemma below.\footnote{%
	Strictly speaking, Eiter and Gottlob~\cite{EiterGottlob95RelatedProblems}
	prove \autoref{lem:EiterGottlob} only for the case where both $\Gyp$ and $\Hyp$ are Sperner.
	We give a short proof of the general case in \autoref{app:EiterGottlob} for completeness.
}
The main issue is that this algorithm
is too slow for our purposes.
To improve the delay, we need to reduce the dependency on the size of $\Gyp$ to $|\Gyp|^c$
(i.e, make the exponent independent of $r$).
We are willing to accept an $n^{r+O(1)}$ dependency instead.

In the following, we let $\Tr(\Gyp)|_r$ denote the collection of all minimal hitting sets of $\Gyp$
with cardinality at most $r$.
Evidently, $\rank(\Tr(\Gyp)) \le r$ holds if and only if $\Tr(\Gyp)|_r = \Tr(\Gyp)$.

\begin{restatable}{lemma}{EiterGottlob}
\textup{\textsf{(Eiter and Gottlob~\cite{EiterGottlob95RelatedProblems})}}
\label{lem:EiterGottlob}
	Let $\Gyp,\Hyp$ be two hypergraphs such that $r = \rank(\Hyp)$.
	Then, $\Gyp = \Tr(\Hyp)$ is true if and only if the following three conditions are fulfilled.
	\vspace*{.25em}
	\begin{enumerate}
		\item \label[cond]{cond:subset} $\Gyp \subseteq \Tr(\Hyp)$ - all edges of $\Gyp$ are minimal hitting sets for $\Hyp$;
		\vspace*{.25em}
		\item \label[cond]{cond:bounded_HS} $\Tr(\Gyp)|_{r} \subseteq \Hyp$ - all minimal hitting sets for $\Gyp$ with at most $r$ vertices are edges of $\Hyp$;
		\vspace*{.25em}
		\item \label[cond]{cond:bounded_TR} $\rank(\Tr(\Gyp)) \le r$ - the transversal rank of $\Gyp$ is at most $r$.
	\end{enumerate}
\end{restatable}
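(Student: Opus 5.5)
We prove both directions separately, using throughout the duality $\Tr(\Tr(\Hyp)) = \min(\Hyp)$, where $\min(\Hyp)$ denotes the inclusion-wise minimal edges of $\Hyp$.

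\emph{Necessity.} Assume $\Gyp = \Tr(\Hyp)$. Then \cref{cond:subset} holds trivially. Every minimal hitting set of $\Gyp = \Tr(\Hyp)$ is a minimal edge of $\Hyp$, so $\Tr(\Gyp) \subseteq \Hyp$. This gives \cref{cond:bounded_HS}, and also \cref{cond:bounded_TR}, because every edge of $\Hyp$ has at most $r$ vertices.

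\emph{Sufficiency.} Assume all three conditions. By \cref{cond:bounded_TR}, every minimal hitting set of $\Gyp$ has size at most $r$, so $\Tr(\Gyp) = \Tr(\Gyp)|_r$. With \cref{cond:bounded_HS} this yields $\Tr(\Gyp) \subseteq \Hyp$. It remains to show $\Tr(\Hyp) \subseteq \Gyp$; the reverse inclusion is \cref{cond:subset}.

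Take $T \in \Tr(\Hyp)$ and suppose, for contradiction, that $T \notin \Gyp$. We first claim that no edge $G \in \Gyp$ satisfies $G \subseteq T$. Indeed, $G$ is a minimal hitting set of $\Hyp$ by \cref{cond:subset}, and $T$ is a minimal hitting set, so $G \subseteq T$ forces $G = T$, contradicting $T \notin \Gyp$. Hence every $G \in \Gyp$ contains a vertex outside $T$, and $V \setminus T$ is a hitting set of $\Gyp$. It therefore contains some $M \in \Tr(\Gyp)$, which is disjoint from $T$. But $M \in \Tr(\Gyp) \subseteq \Hyp$, so $M$ is an edge of $\Hyp$ not hit by $T$. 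This contradicts $T$ being a hitting set of $\Hyp$.

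The degenerate cases need a quick check. If $\Gyp = \emptyset$, then $\Tr(\Gyp) = \{\emptyset\}$, so \cref{cond:bounded_HS} gives $\emptyset \in \Hyp$, hence $\Tr(\Hyp) = \emptyset = \Gyp$. If $\emptyset \in \Gyp$, \cref{cond:subset} requires $\emptyset \in \Tr(\Hyp)$, which means $\Hyp = \emptyset$; then $\Tr(\Hyp) = \{\emptyset\}$, and \cref{cond:subset} forces $\Gyp = \{\emptyset\}$. The main argument already covers these cases, since it uses only the inclusions $\Tr(\Gyp) \subseteq \Hyp$ and $\Gyp \subseteq \Tr(\Hyp)$.

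The only point where the general, non-Sperner setting matters is the necessity direction. There, $\Tr(\Tr(\Hyp))$ equals the minimal edges of $\Hyp$, so I must use $\min(\Hyp) \subseteq \Hyp$ and not equality with $\Hyp$; this is why \cref{cond:bounded_HS} is stated as an inclusion.
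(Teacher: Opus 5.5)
Your proof is correct. The necessity direction matches the paper's, since both rest on $\Tr(\Tr(\Hyp)) = \min(\Hyp)$. For sufficiency you take a more elementary route.

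The paper argues order-theoretically. From the second and third conditions it gets $\Tr(\Gyp) \subseteq \Hyp$, hence $\Tr(\Gyp) \preccurlyeq \Hyp$. The duality lemma of Birnick et al.\ then gives $\min(\Gyp) = \Tr(\Tr(\Gyp)) \succcurlyeq \Tr(\Hyp)$. The first condition gives $\min(\Gyp) = \Gyp \preccurlyeq \Tr(\Hyp)$, and antisymmetry of $\preccurlyeq$ on Sperner hypergraphs finishes.

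Your complement argument says: if $T \in \Tr(\Hyp)$ contains no edge of $\Gyp$, then $V \setminus T$ hits $\Gyp$ and so contains some $M \in \Tr(\Gyp) \subseteq \Hyp$ that $T$ misses. This is an inlined proof of the one half of that duality lemma that is actually needed, namely that every minimal transversal of $\Hyp$ contains an edge of $\Gyp$. Your observation that $G \subseteq T$ forces $G = T$ then plays the role of the antisymmetry step.

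What you gain is a self-contained sufficiency argument. It needs neither the cited lemma, nor the identity $\Tr(\Tr(\Gyp)) = \min(\Gyp)$, nor Sperner-ness. It also covers the degenerate cases $\Gyp = \emptyset$ and $\emptyset \in \Gyp$ without special treatment, so your separate check of them is redundant but harmless. What the paper gains is brevity, given the cited lemma, and a clean framing in terms of the general $\preccurlyeq$-duality between a hypergraph and its transversal hypergraph.
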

\vspace*{.5em}
 
Checking whether a set $T \subseteq V$ of vertices is a minimal hitting set of $\Hyp$ 
can be done in time $O(|\Hyp||T|)$.
Hence, the first condition is testable in time $|\Gyp| \cdot O(|\Hyp| \nwspace n)$.
Regarding \Cref{cond:bounded_HS},
it is enough to brute force all subsets of $V$ with at most $r$ vertices,
check whether they are minimal hitting sets of $\Gyp$ and, if so,
whether they are also an edge of $\Hyp$.
Moreover, the latter test is triggered at most $|\Hyp|$ times
as $|\Tr(\Gyp)|_r| > |\Hyp|$ is sufficient for $\Tr(\Gyp)|_{r} \nsubseteq \Hyp$.
The whole verification takes time
$O(|\Gyp| \nwspace r \binom{n}{r} + |\Hyp| \nwspace r)) = |\Gyp| \cdot O(\nwspace n^{r})$.
The latter estimate uses that $\Hyp$ is an $r$-hypergraph,
which implies $|\Hyp| \le \binom{n}{\le r} = O(\binom{n}{r}) = O(n^r/ \nwspace r!)$.

The problem is with \Cref{cond:bounded_TR}.
Eiter and Gottlob~\cite{EiterGottlob95RelatedProblems} use the result by Berge and Duchet~\cite{BergeDuchet75GilmoresTheorem} to decide it in time
$O(|\Gyp|^{r+2} \nwspace n)$,
we use the improved algorithm instead.
Evidently, $\Gyp$ has transversal rank at most $r$ if and only if it is a \emph{no}-instance
for the \TransRank problem with parameter $r+1$.
We check this in time $|\Gyp|^c \cdot n^{r+O(1)}$.
In total, this gives the desired method to decide $\Gyp = \Tr(\Hyp)$
in time $ |\Gyp|^c \cdot n^{r+O(1)}$.
(See also \autoref{thm:second_equivalence}.)

We turn this into an enumeration algorithm for $\Tr(\Hyp)$.
It maintains a set of solutions $\Gyp \subseteq \Tr(\Hyp)$
and proceeds in stages numbered from $0$ to $|\Tr(\Hyp)|$.
The $i$-th stage corresponds to the number $|\Gyp|$ of minimal hitting sets found so far.
In each stage, the algorithm checks whether $\Gyp = \Tr(\Hyp)$ already holds.
If so, it terminates.
Otherwise, it has to find a new minimal solution in $\Tr(\Hyp){\setminus}\Gyp$.
By \autoref{lem:EiterGottlob}, 
it must be that $\Tr(\Gyp)|_r \nsubseteq \Hyp$ or $\rank(\Tr(\Gyp)) > r$.

First, suppose we have $\Tr(\Gyp)|_r \nsubseteq \Hyp$,
hence there exists a minimal hitting set $S$ for $\Gyp$ with $|S| \le r$ vertices,
which is not an edge of $\Hyp$.
The invariance $\Gyp \subseteq \Tr(\Hyp)$
states that any edge of $\Gyp$ intersects each one in $\Hyp$.
Conversely, it also means that each edge\footnote{
	In this proof, we use the symbol $H$ for edges of $\Hyp$ to better distinguish them from the edges $G \in \Gyp$.
}
$H$ in $\Hyp$ is a hitting set for $\Gyp$.
Since $S$ is minimal,
if there were some $H \in \Hyp$ with $H \subseteq S$,
we would have $H = S$ and thus $S \in \Hyp$, a contradiction.
It follows that $H \nsubseteq S$ for each $H \in \Hyp$.
This is equivalent to the complement $V{\setminus}S$ being a (not necessarily minimal)
hitting set for $\Hyp$.
Since $S \in \Tr(\Gyp)|_r$ is a hitting set for $\Gyp$,
no edge $G \in \Gyp$ is entirely contained in $V{\setminus}S$.
Therefore, \emph{any} minimal hitting set $T \subseteq V{\setminus}S$
lies in $\Tr(\Hyp){\setminus}\Gyp$.
Such a new solution $T$ can be computed from $S$
in time $O(|\Hyp| \nwspace n) = n^{r+O(1)}$.

Observe that the only two properties of $S$ we used
were that $S \in \Tr(\Gyp)$ is a minimal hitting set for $\Gyp$,
and that it is not an edge of $\Hyp$.
In the other case where $\rank(\Tr(\Gyp)) > r$, 
the algorithm for \TransRank with parameter $r+1$ provides
a minimal hitting set $S$ for $\Gyp$ with $|S| > r$.
The rank of $\Hyp$ is $r$, hence $S \notin \Hyp$.
The same argument as above again shows that any minimal hitting set $T$ for $\Hyp$
with $T \subseteq V{\setminus}S$ lies in $\Tr(\Hyp){\setminus}\Gyp$.
In summary, computing a new solution 
is dominated by the $i^c \cdot n^{r+O(1)}$ time to verify $\Gyp \neq \Tr(\Hyp)$.

\subsection{Proofs of \texorpdfstring{\Cref{thm:second_equivalence,thm:conformality_test}}
	{Theorems 5 and 6}}
\label{subsec:equiv_remarks}

The results above also imply an equivalence between an improved enumeration algorithm
and a faster way to decide $\Gyp = \Tr(\Hyp)$.

\secondequivalence*

\begin{proof}
	One direction has been shown already in the first part of \autoref{subsec:equiv_hitting_sets}.
	The other one can be obtained in a similar fashion as the second part above.
	Assume access to a black-box procedure that decides whether $\Gyp = \Tr(\Hyp)$ holds,
	and otherwise returns some edge in 
	$(\Gyp{\setminus} \Tr(\Hyp)) \cup (\Tr(\Hyp){\setminus}\Gyp)$.
	Then, starting with $\Gyp = \emptyset \subseteq \Tr(\Hyp)$
	and maintaining $\Gyp \subseteq \Tr(\Hyp)$ ensures that
	we always get a new edge from $\Tr(\Hyp){\setminus}\Gyp$ until $\Gyp = \Tr(\Hyp)$ is reached.
	If the black-box algorithm runs in time $|\Gyp|^c \cdot n^{r+O(1)}$ for $r$-hypergraphs $\Hyp$,
	the $i$-th delay of the enumeration is $i^c \cdot n^{r+O(1)}$.
\end{proof}

\autoref{thm:main_equivalence} shows that we need better enumeration algorithms for maximal hypercliques
in $r$-uniform hypergraphs to obtain improved algorithms for the \TransRank problem and thus \ConfDeg.
However, for the simplest case $r = 2$,
that is, for maximal cliques in ordinary graphs,
very fast enumeration algorithms are already known.
This gives an improved conformality test.

\conformalitytest*

\begin{proof}
	The maximal cliques of a graph $G = (V,E)$ can be enumerated
	with delay $O(|E|n) = O(n^3)$, independently of the number of solutions 
	seen so far~\cite{Johnson88MaxIndSet,Tsukiyama77GeneratingAllTheMaxIndSets}.
	We plug this into the method in \autoref{subsec:equiv_hyperclique} 
	to decide the conformality of a hypergraph,
	that is, \ConfDeg with $k = 2$.
	In more detail, we construct the 2-section $[\Hyp]_2$
	(a.k.a.\ the co-occurrence graph or clique expansion) 
	that has an edge between any pair of vertices in $V$ that appear together in an edge of $\Hyp$.
	Enumerating the first $m = |\Hyp|$ of its maximal cliques takes time $O(mn^3)$.
	If the enumeration algorithm does not terminate after the $m$-th clique,
	or if one of them is not an edge of $\Hyp$,
	then $\Hyp$ is not conformal; otherwise it is.
\end{proof}

\bibliography{enum_refs}

\appendix

\section{Proofs Omitted from the Introduction}
\label{app:intro_proofs}

\noindent
\textbf{\TransRank in time $O(m^{k+1} + m^k n)$.}
For a hypergraph $\Hyp$,
let $\min(\Hyp)$ be the \emph{minimization} of $\Hyp$,
that is, the subhypergraph of its inclusion-wise minimal edges.
Note that $\min(\Hyp)$ is Sperner for any $\Hyp$.
It does not make a difference for the hitting sets
whether the minimization or the full hypergraph is considered,
we have $\Tr(\Hyp) = \Tr(\min(\Hyp))$.
Moreover, the minimal edges are precisely
the minimal hitting set of the hitting sets, $\min(\Hyp) = \Tr(\Tr(\Hyp))$,
see \cite{Berge89Hypergraphs}.

Without loss of generality, we can assume that $k \ge 3$.
This implies that the $O(m^2 \nwspace n)$ time needed to compute $\min(\Hyp)$
is immaterial compared to the running time of \TransRank.
We thus assume that the input hypergraph $\Hyp$ is Sperner.
For some $k$-edge subhypergraph $\Gyp = \{E_1, E_2, \dots, E_k\} \subseteq \Hyp$,
we let $D(\Gyp) = \bigcup_{1 \le i < j \le k} (E_i \cap E_j)$ denote the set of all vertices
that appear in at least two edges of $\Gyp$.
With this notation, \autoref{lem:BergeDuchet_subhypergraph} states 
that $\rank(\Tr(\Hyp)) \ge k$ is equivalent to the existence of 
a $k$-subhypergraph such that $S = V{\setminus}D(\Gyp)$ is a hitting set for $\Hyp$.
The latter, in turn, is equivalent to $D(\Gyp)$ not containing any edge of $\Hyp$.
On a high level, our algorithm searches for $k$ subhypergraphs of 
$\Fyp_i \subseteq \Hyp$, each with $k-1$ edges,
that can be combined to some $\Gyp$ such that $E \nsubseteq D(\Gyp)$ holds for all $E \in \Hyp$.
	
We first give an alternative way to compute the set $D(\Gyp)$, namely
\begin{equation}
\label{eq:reformulation}
	D(\Gyp) = \bigcap_{i=1}^k 
		\left( E_1 \cup \dots \cup E_{i-1} \cup E_{i+1} \cup \dots \cup E_k \right). 
\end{equation}
To see this equality, note that any vertex $v \in V$ that is in at least two different edges of $\Gyp$
is also contained in any union of all but one edge of $\Gyp$.
Conversely, suppose that the degree of $v$ in $\Gyp$ is at most $1$
and let $E_i \in \Gyp$ be the unique edge containing $v$,
or $E_i = E_1$ if no such edge exists.
Then, $v \notin E_1 \cup \dots \cup E_{i-1} \cup E_{i+1} \cup \dots \cup E_k$.

In the following, we assume an (arbitrary) order among the edges of $\Hyp$.
Fix a subhypergraph $\Fyp \subseteq \Hyp$ with $k{-}1$ edges.
The ordered list $L(\Fyp)$ contains all edges $E \in \Hyp$ for which 
$E \subseteq \bigcup \Fyp$, in the order induced by $\Hyp$.
The ordered lists for \emph{all} possible $(k{-}1)$-hypergraphs
can be computed in total time $\binom{m}{k-1}\cdot \big(O(kn) + O(mn) \big) = O(m^k n)$.

For some $\Gyp \subseteq \Hyp$ with $k$ edges,
let $\Fyp_1, \dots, \Fyp_k \subseteq \Gyp$ be all its subhypergraphs
that are missing exactly one edge.
By \autoref{eq:reformulation}, we have $E \subseteq D(\Gyp)$ for some $E \in \Hyp$
if and only if $E$ appears on every list $L(\Fyp_1), \dots, L(\Fyp_k)$.
Equivalently, if $L(\Fyp_1) \cap \dots \cap L(\Fyp_k) = \emptyset$,
then $\Gyp$ witnesses that $\rank(\Tr(\Hyp)) \ge k$.

Since the lists are ordered and have at most $m$ entries,
we can intersect them in time $O(km)$.
Doing this for all $k$-edge subhypergraphs of the input $\Hyp$
takes total time $\binom{m}{k} \cdot O(km) = O(m^{k+1})$.
Note that the last bound does not assume $k$ to be constant.
If the search finds a suitable subhypergraph $\Gyp$,
then \emph{any} minimal hitting set contained in $V{\setminus}D(\Gyp)$
has size at least $k$.
We show below how to compute one of them in time $O(mn)$.
\vspace*{.5em}

\noindent
\textbf{Minimizing a hitting set.}
We occasionally have the situation that we are given some hitting set $S$ for some hypergraph $\Hyp$
and need to compute an arbitrary \emph{minimal} hitting set $T \subseteq S$.
The standard greedy algorithm tries all vertices $s \in S$ and 
tests whether $S{\setminus}\{s\}$ is still a hitting set;
if so, $s$ is removed.
The result is guaranteed to be a minimal hitting set.
The bottleneck are the repeated tests whether all edges of $\Hyp$ are hit,
resulting in a total running time of $O(m \nwspace |S|^2)$.
We now show how to reduce this to $O(m \nwspace |S|)$.

The algorithm first takes $O(m \nwspace |S|)$ time to build the bipartite incidence graph of $S$.
In more detail, that is the graph $I$ with vertices $S \cup \Hyp$ 
such that $s \sim E$ is an edge iff $s \in E$.
Since $S$ is a hitting set, the degree $\deg_I(E)$ of any hyperedge $E \in \Hyp$ is at least $1$.
The algorithm constructs the set $T$ as follows.
It does a first pass over the hyperedges $E \in \Hyp$ and, if $\deg_I(E) = 1$,
the unique neighbor $s \in N_I(E)$ is put in $T$ and 
its closed neighborhood $N_I[s]$, (that is, $s$ and all hyperedges containing $s$)
is removed from the incidence graph.

If this does not already exhaust the partition $\Hyp$, 
then all remaining hyperedges have degree at least $2$.
In the second phase, the algorithm takes an arbitrary\footnote{
	Different tie breaking rules for the hyperedge and its neighbor
	may lead to different minimal sets $T \subseteq S$.
	We ignore these implementation details since
	we are only looking for \emph{some} minimal hitting set.
} remaining $E \in \Hyp$
and one of its neighbors $s$ and removes $s$ from the graph
(without putting it in $T$ or removing any hyperedges).
This cannot isolate any member of $\Hyp$,
but may result in some of them having degree $1$.
If so, the unique neighbor is included in $T$ in the same fashion as described above,
reducing the number of remaining hyperedges.
The procedure ends once all hyperedges have been removed that way.

The set $T$ is a hitting set for $\Hyp$, because a hyperedge only gets removed
if one of its vertices is included in $T$.
The degree-1-rule further ensures that every $s \in T$ has a private hyperedge, thus $T$ is minimal.
The key observation for the running time is that each vertex $s \in S$
can be processed in time $O(\deg_I(s))$,
regardless of whether it is included in $T$ or not.
More precisely, there exists a universal constant $c > 0$ such that the procedure takes time
\begin{equation*}
	O(m \nwspace |S|) + \sum_{s \in S} c \cdot \deg_I(s) = O(m \nwspace |S|).
\end{equation*}

\section{Proof of \autoref{lem:EiterGottlob}}
\label{app:EiterGottlob}

We consider a certain binary relation over hypergraphs.
We write $\Gyp \preccurlyeq \Hyp$ if every edge of $\Gyp$ 
contains an edge of $\Hyp$ as a subset.
Note that $\preccurlyeq$ is a pre-order in that it is reflexive and transitive.
Moreover, $\Gyp \subseteq \Hyp$ implies $\Gyp \preccurlyeq \Hyp$.
When restricted to Sperner hypergraphs, the relation is even antisymmetric
(i.e., a partial order).
The key fact we use is that the relation is inverted when taking hitting sets
as stated in the following result by Birnick, Bläsius, Friedrich, Naumann, Papenbrock and Schirneck~\cite{Birnick20HPIValid}

\begin{lemma}[Birnick et al.~\cite{Birnick20HPIValid}]
\label{lem:duality_transversal}
	Let $\Gyp$ and $\Hyp$ be hypergraphs.
	Then, $\Gyp \preccurlyeq \Hyp$ holds iff $\Tr(\Gyp) \succcurlyeq \Tr(\Hyp)$.
\end{lemma}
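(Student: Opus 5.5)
The plan is to prove both directions directly from the definitions. I use only two facts: every hitting set contains a minimal one, and a set $A$ contains no edge of a hypergraph $\Fyp$ if and only if $V{\setminus}A$ is a hitting set of $\Fyp$. Recall that $\Tr(\Gyp) \succcurlyeq \Tr(\Hyp)$ means $\Tr(\Hyp) \preccurlyeq \Tr(\Gyp)$. In words, every minimal hitting set of $\Hyp$ contains a minimal hitting set of $\Gyp$.

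\emph{Forward direction.} Assume $\Gyp \preccurlyeq \Hyp$ and let $T \in \Tr(\Hyp)$ be arbitrary.
\begin{enumerate}
	\item For every $G \in \Gyp$, pick an edge $H \in \Hyp$ with $H \subseteq G$.
	\item Since $T$ hits $H$, we get $\emptyset \neq T \cap H \subseteq T \cap G$.
	\item Hence $T$ is a hitting set of $\Gyp$, and the greedy minimization (see \autoref{app:intro_proofs}) yields some $T' \in \Tr(\Gyp)$ with $T' \subseteq T$.
\end{enumerate}
This is exactly $\Tr(\Hyp) \preccurlyeq \Tr(\Gyp)$. If $\Tr(\Hyp) = \emptyset$ (for instance when $\emptyset \in \Hyp$), the claim holds vacuously.

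\emph{Backward direction.} Assume every $T \in \Tr(\Hyp)$ contains some member of $\Tr(\Gyp)$. Suppose, for contradiction, that some $G \in \Gyp$ contains no edge of $\Hyp$.
\begin{enumerate}
	\item Then $H \nsubseteq G$ for all $H \in \Hyp$, so every edge of $\Hyp$ meets $V{\setminus}G$. Thus $V{\setminus}G$ is a hitting set of $\Hyp$.
	\item In particular, $\Tr(\Hyp)$ is non-empty, and there is some $T \in \Tr(\Hyp)$ with $T \subseteq V{\setminus}G$.
	\item By assumption, $T$ contains some $T' \in \Tr(\Gyp)$.
	\item Since $T'$ is a hitting set of $\Gyp$, it must meet $G$. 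But $T' \subseteq T \subseteq V{\setminus}G$, a contradiction.
\end{enumerate}
Hence every edge of $\Gyp$ contains an edge of $\Hyp$, that is, $\Gyp \preccurlyeq \Hyp$.

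The only point needing care is the degenerate cases: $\emptyset \in \Hyp$, $\Hyp = \emptyset$, and empty transversal hypergraphs. In the backward direction the argument never assumes $\Tr(\Hyp)$ is non-empty; it derives this from $V{\setminus}G$ being a hitting set. In the forward direction an empty $\Tr(\Hyp)$ makes the statement vacuous. So no separate case analysis is needed beyond checking these remarks, and I expect no step to be a real obstacle.
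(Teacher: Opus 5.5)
Your proof is correct. The forward direction is monotonicity of hitting (a transversal of $\Hyp$ meets every superset of an edge of $\Hyp$) plus the fact that every finite hitting set contains a minimal one. The backward direction, which observes that $V{\setminus}G$ is a hitting set of $\Hyp$ whenever $G$ contains no edge of $\Hyp$, is sound. You also handle the degenerate cases ($\emptyset \in \Hyp$, $\Hyp = \emptyset$, $\Tr(\Gyp) = \emptyset$) correctly.

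The paper itself gives no proof of this lemma; it imports it from Birnick et al., so there is no in-paper argument to compare against.

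A common alternative for the backward direction applies your forward direction to $\Tr(\Hyp) \preccurlyeq \Tr(\Gyp)$. It then uses the duality $\Tr(\Tr(\cdot)) = \min(\cdot)$ recalled in the paper's appendix to get $\min(\Gyp) \preccurlyeq \min(\Hyp)$, and hence $\Gyp \preccurlyeq \Hyp$. That route is shorter but depends on the duality theorem. Your complement argument needs only the existence of a minimal hitting set inside any hitting set. This is a small advantage here, because the paper combines this lemma with the duality in its proof of the Eiter--Gottlob characterization.
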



\EiterGottlob*

\begin{proof}
	Recall that $\Tr(\Tr(\Hyp) = \min(\Hyp)$ holds for the minimization of $\Hyp$
	(see \autoref{app:intro_proofs}).
	Suppose we indeed have $\Gyp = \Tr(\Hyp)$.
	Then, \Cref{cond:subset} is not in question.
	The assumption further implies
	$\Tr(\Gyp)|_r = \Tr(\Tr(\Hyp))|_r = \min(\Hyp)|_r \subseteq \min(\Hyp) \subseteq \Hyp$
	as well as 
	$\rank(\Tr(\Gyp)) = \rank(\min(\Hyp)) \le \rank(\Hyp) = r$,
	which are the last two conditions.
	
	For the opposite direction, 
	we claim that if all three conditions are fulfilled, we have $\min(\Gyp) = \Tr(\Hyp)$.	
	The inclusion $\Gyp \subseteq \Tr(\Hyp)$ 
	implies that the hypergraph $\Gyp$ is Sperner, that is, $\min(\Gyp) = \Gyp$.
	Establishing the claim is therefore sufficient for the desired $\Gyp = \Tr(\Hyp)$.
	
	Regarding the claim, we have $\min(\Gyp) \subseteq \Gyp \subseteq \Tr(\Hyp)$
	and thus $\min(\Gyp) \preccurlyeq \Tr(\Hyp)$.
	\Cref{cond:bounded_HS,cond:bounded_TR} together imply $\Tr(\Gyp) \subseteq \Hyp$,
	and therefore $\Tr(\Gyp) \preccurlyeq \Hyp$.
	Applying \autoref{lem:duality_transversal} to $\Tr(\Gyp)$ and $\Hyp$ gives
	$\min(\Gyp)  = \Tr(\Tr(\Gyp)) \succcurlyeq \Tr(\Hyp)$.
	Since $\min(\Gyp)$ and $\Tr(\Hyp)$ are both Sperner, the antisymmetry of the relation $\preccurlyeq$ 
	gives $\min(\Gyp) = \Tr(\Hyp)$
\end{proof}

\end{document}